\newtheorem{theorem}{Theorem}
\newtheorem{lemma}[theorem]{Lemma}
\newtheorem{corollary}[theorem]{Corollary}
\newcommand{\claimproof}{\noindent\emph{Proof of claim.} }
\newcommand{\smallqed}{{\tiny ($\Box$)}}
\newtheorem{claim}[theorem]{Claim}
\title{Locating Dominating Sets in local tournaments}
\author{Thomas Bellitto\footnote{Faculty of Mathematics, Informatics and Mechanics, University of Warsaw, Poland. \linebreak This author is supported by the European Research Council (ERC)
under the European Union's Horizon 2020 research and innovation programme Grant Agreement 714704.}, Caroline Brosse\footnote{Universit\'e Clermont Auvergne, Clermont Auvergne INP, CNRS, Mines Saint-Etienne, LIMOS, F-63000 Clermont-Ferrand, France. This author is supported by the French government IDEX-ISITE initiative 16-IDEX-0001 (CAP 20-25).}, Benjamin L\'ev\^eque\footnote{CNRS, Laboratoire G-SCOP, Grenoble, France. This author is partially supported by the French ANR Projects GATO (ANR-16-CE40-0009-01)}, Aline Parreau\footnote{LIRIS, Université Claude Bernard Lyon 1, CNRS, France}}
\date{\today}
\begin{document}

\maketitle


\begin{abstract}
A dominating set in a directed graph is a set of vertices $S$ such that all the vertices that do not belong to $S$ have an in-neighbour in $S$. A locating set $S$ is a set of vertices such that all the vertices that do not belong to $S$ are characterized uniquely by the in-neighbours they have in $S$, \textit{i.e.} for every two vertices $u$ and $v$ that are not in $S$, there exists a vertex $s\in S$ that dominates exactly one of them. The size of a smallest set of a directed graph $D$ which is both locating and dominating is denoted by $\gamma^{LD}(D)$. Foucaud, Heydarshahi and Parreau proved that any twin-free digraph $D$ satisfies $\gamma^{LD}(D)\leq \frac{4n} 5 +1$ but conjectured that this bound can be lowered to $\frac{2n} 3$. The conjecture is still open. They also proved that if $D$ is a tournament, \textit{i.e.} a directed graph where there is one arc between every pair of vertices, then $\gamma^{LD}(D)\leq \lceil \frac{n}{2}\rceil$.

The main result of this paper is the generalization of this bound to connected local tournaments, \textit{i.e.} connected digraphs where the in- and out-neighbourhoods of every vertex induce a tournament. We also prove $\gamma^{LD}(D)\leq \frac{2n} 3$ for all quasi-twin-free digraphs $D$ that admit a supervising vertex (a vertex from which any vertex is reachable). This class of digraphs generalizes twin-free acyclic graphs, the most general class for which this bound was known.

\end{abstract}


\section{Introduction}

In this paper we consider loopless and finite digraphs. Our terminology is consistent with \cite{bookjbjgutin}. Especially, we refer the reader to this book for further information about the classes of digraphs we consider. 
The \emph{order} of a digraph is its number of vertices. A digraph $D$ is \emph{simple} if there is at most one arc between two vertices $x$ and $y$.
A digraph $D$ is called \emph{connected}, if for every vertices $x$ and $y$ of $D$ there exists a (non necessarily directed) path from $x$ to $y$.
A digraph $D$ is called \emph{strongly connected} (or \emph{strong} for short), if for every vertices $x$ and $y$ of $D$, there exists a directed path from $x$ to $y$ and a directed path from $y$ to $x$.

Let $x$, $y$ be distinct vertices of a digraph $D$. If there is an arc from $x$ to $y$, we say that $x$
\emph{dominates} $y$ and denote it by $x \to y$.
We say that $y$ is an \emph{out-neighbour} of $x$, and that $x$ is an \emph{in-neighbour} of $y$.
The \emph{open in-neighbourhood} and the \emph{open out-neighbourhood} of a vertex $v$,
denoted by $N^{-}(v)$ and $N^{+}(v)$ respectively, are the set of in-neighbours of $v$ and the set of out-neighbours
of $v$, respectively.  A \emph{source} is a vertex with no in-neighbours, and a \emph{sink} is a
vertex with no out-neighbours.
We use the notations $d^{-}(v)=|N^{-}(v)|$ and $d^{+}(v)=|N^{+}(v)|$ to denote the in-degree and the out-degree of $v$.
Further, the \emph{closed in-neigbourhood} of $v$ is $N^{-}[v] = N^{-}(v) \cup \{v\}$ and the \emph{closed out-neigbourhood} of $v$ is $N^{+}[v] = N^{+}(v) \cup \{v\}$.
Two vertices are called \emph{twins} if $N^{-}(x)=N^{-}(y)$ or if $N^{-}[x]=N^{-}[y]$, and \emph{quasi twins} if $N^{-}(x) = N^{-}[y]$ or $N^{-}(y) = N^{-}[x]$.

A digraph is \emph{twin-free} (resp. \emph{quasi-twin-free}) if it contains no twins (resp. no twins nor quasi-twins). 
A \emph{dominating set} $S$ of a digraph $D$ is a set of vertices such that any vertex not in $S$ is dominated by a vertex in $S$.
The smallest size of a dominating set of $D$ is called its domination number, and is denoted by $\gamma(D)$.
A \emph{locating set} $S$ of $D$ is a subset of vertices such that for every pair of vertices $x$ and $y$ not in $S$, there exists a vertex $s$ in $S$ that dominates exactly one vertex among $x$ and $y$.
The smallest size of a locating set of $D$ is denoted by $\gamma^L(D)$.
A \emph{locating-dominating set} of $D$ is a set of vertices that is both locating and dominating.
The minimum size of a locating-dominating set of $D$ is called the \emph{location-domination number}, denoted by $\gamma^{LD}(D)$.
Note that at most one vertex of $D$ is not dominated by a locating set of $D$, ensuring that: $$\gamma^L(D)\leq \gamma^{LD}(D)\leq \gamma^{L}(D)+1$$

Locating-dominating sets in digraphs have been introduced in \cite{LDdigraph1,LDdigraph2} and further studied in  \cite{Aline} where some upper bounds on the location-domination number of digraphs have been proved.
If $D$ contains at least one edge, then for any vertex $x$ that is not isolated, the set $V(D)\setminus\{x\}$ is locating-dominating and thus, if $D$ has order $n$, then $\gamma^{LD}(D)\leq n-1$. This bound is tight and a complete characterization of the digraphs reaching it is given in \cite{Aline}. 
However,  all these digraphs contain many pairs of twins. The authors of \cite{Aline} showed that any twin-free digraph $D$ of order $n$ satisfies $\gamma^{LD}(D)\leq \frac{4n}{5}+1$. Note that similar questions were first considered in the non-oriented case for which it is conjectured that any twin-free graph has a locating-dominating set of size $\frac{n} 2$ \cite{heia, Garijo}.

The authors of \cite{Aline} lowered the general upper bound $\frac{4n}{5}+1$ for special cases.
A digraph $D$ is a \emph{tournament} if there is exactly one arc between every pair of distinct vertices of $D$.
In particular, it is proven in \cite{Aline} that for a tournament $D$, the upper bound on the location-domination number can be lowered to $\gamma^{LD}(D)\leq \lceil \frac{n}{2}\rceil$.
We extend this last result to a larger class of digraphs.
A digraph $D$ is a \emph{local tournament} if it is simple and if the in-neighbourhood and the out-neighbourhood of every vertex of $D$ induce tournaments. The main result of the current paper is the following theorem.

\begin{theorem}
\label{th:main}
A connected local tournament $D$ of order $n$ satisfies $\gamma^{LD}(D) \leq \lceil \frac{n}{2} \rceil$.
\end{theorem}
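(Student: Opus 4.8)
\medskip
\noindent\textbf{Proof strategy.}
The plan is to argue by induction on $n$ (the base cases $n\le 2$ being immediate), using the structure theory of local tournaments together with one elementary monotonicity remark: in any digraph, every superset of a locating-dominating set is again locating-dominating, since enlarging $S$ both removes pairs that must be separated and adds vertices that can separate. I would also use two classical facts: a connected local tournament has a Hamiltonian directed path, and a strong local tournament has a Hamiltonian directed cycle. A first reduction disposes of \emph{sources}: if $D$ has a source $s$, then $D-s$ is still a connected local tournament (the out-neighbours of $s$ induce a connected tournament to which, inside $D-s$, every other vertex is attached, since any walk from $s$ must leave through $N^{+}(s)$), and any locating-dominating set $S$ of $D-s$ obtained by induction is already locating-dominating for $D$: $s$ is the unique vertex left undominated, and it is separated from every other non-selected vertex $x$ by any $s'\in S$ dominating $x$, because $s'\not\to s$. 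Since $|S|\le\lceil (n-1)/2\rceil\le\lceil n/2\rceil$, we may henceforth assume $D$ has no source.

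\medskip
Suppose next that $D$ is not strong, and let $D_{1},\dots,D_{p}$ ($p\ge 2$) be its strong components in an acyclic order, so that there is no arc from $D_{j}$ to $D_{i}$ whenever $j>i$. Since $D$ has no source, $|V(D_{1})|\ge 2$ and $D_{1}$ is a strong local tournament; using the local-tournament structure one checks that $D-V(D_{1})$ stays connected. Applying induction to $D_{1}$ and to $D-V(D_{1})$ yields locating-dominating sets $S_{1}$ and $S'$, and I claim $S:=S_{1}\cup S'$ works for $D$. Indeed, every vertex of $D_{1}$ has all its in-neighbours inside $D_{1}$, hence is dominated and correctly located by $S_{1}$; every vertex outside $D_{1}$ is dominated and located by $S'$ (the separating vertex lies in $S'\subseteq S$ and its arcs to the pair are unchanged in $D$); and a vertex $x\in V(D_{1})\setminus S_{1}$ is separated from a vertex $y\notin V(D_{1})$ by any $s'\in S'$ dominating $y$, since no arc goes from a later component back to $D_{1}$, so $s'\not\to x$. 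The only point requiring care is the count: $\lceil |V(D_{1})|/2\rceil+\lceil (n-|V(D_{1})|)/2\rceil$ exceeds $\lceil n/2\rceil$ precisely when both summands are odd, and that single surplus must be absorbed by peeling an appropriately chosen piece instead (or by a sharpened bound for strong local tournaments of odd order); the same bookkeeping covers the case where $D-V(D_{1})$ disconnects.

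\medskip
There remains the case of a strong local tournament, which is where I expect the main difficulty. Here I would rely on the structure theory of strong local tournaments, which says that $D$ is semicomplete --- hence a tournament, and $\gamma^{LD}(D)\le\lceil n/2\rceil$ is exactly the bound of \cite{Aline} --- or $D$ is round-decomposable, $D=R[T_{1},\dots,T_{r}]$ with $R$ a strong round local tournament on $r\ge 2$ vertices and each $T_{i}$ a tournament, or $D$ lies in one of the finitely many explicit exceptional families, which can be treated directly. In the round-decomposable case I would assemble $S$ block by block, taking about half of each $T_{i}$ recursively while aligning the choices across consecutive blocks so as to realise an ``every other block'' pattern along the Hamiltonian cycle of $R$. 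Domination then follows immediately from the fact that, in a round digraph, in- and out-neighbourhoods are cyclic intervals; and the crucial separation statement --- that two non-selected vertices $x\in T_{a}$ and $y\in T_{b}$ are distinguished --- follows because the out-neighbourhood of a suitably chosen selected vertex, being a cyclic interval, cannot contain both $x$ and $y$ without forcing an arc oriented against the Hamiltonian cycle.

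\medskip
The hard part is exactly this strong round-decomposable case. A naive ``every other vertex along a Hamiltonian cycle'' does \emph{not} in general locate every pair of vertices, so the selection has to be tied to the round decomposition and to the recursion inside the blocks $T_{i}$, and the whole construction must still fit inside the tight bound $\lceil n/2\rceil$; carrying the parity bookkeeping correctly through the block structure (and through the non-strong reduction above) is the delicate point. A secondary difficulty is the explicit treatment of the exceptional families produced by the structure theorem.
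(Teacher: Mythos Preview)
Your proposal has a concrete error and two genuine gaps that together leave the argument incomplete.

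\medskip
\textbf{The source reduction is wrong.} If $s$ is a source and $S$ is a locating-dominating set of $D-s$, you write that ``$S$ is already locating-dominating for $D$: $s$ is the unique vertex left undominated''. But these two clauses contradict each other: if $s\notin S$ and $s$ is undominated, then $S$ is not a dominating set of $D$, hence not locating-dominating. You would have to add $s$ to $S$, and then $|S\cup\{s\}|\le\lceil (n-1)/2\rceil+1$, which exceeds $\lceil n/2\rceil$ whenever $n$ is even. So this reduction does not go through as stated.

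\medskip
\textbf{The parity bookkeeping is the whole difficulty, and you do not resolve it.} In the non-strong case you correctly observe that $\lceil |V(D_1)|/2\rceil+\lceil (n-|V(D_1)|)/2\rceil$ can overshoot $\lceil n/2\rceil$ by one, and you propose to ``absorb'' this by a sharpened bound for odd pieces. But you never prove such a sharpening, and it is exactly what the paper spends Section~\ref{sec:round} establishing: for a connected roundable local tournament that is \emph{not} strong one has $\gamma^{L}(D)\le\lfloor n/2\rfloor$ (Lemma~\ref{lem:roundbound}), and this floor (not ceiling) is what makes the counts close. Obtaining that floor requires the careful three-case construction along the chain of sub-tournaments $D_1,\dots,D_t$ with alternating use of locating versus locating-dominating sets (Lemma~\ref{lem:size} and the induction in Lemma~\ref{claim:induction}); a plain induction on strong components does not yield it.

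\medskip
\textbf{The structure theory for the strong case is misremembered.} There are no ``finitely many explicit exceptional families'' of strong local tournaments. What the classification of Bang-Jensen, Guo, Gutin and Volkmann gives (and what the paper uses in Section~\ref{sec:notround}) is that a strong local tournament which is neither a tournament nor roundable admits a minimal separator $X$ such that $D[X]$ is a tournament, $D\setminus X$ is a connected non-strong (hence roundable) local tournament, and all arcs go from $T_r$ to $X$ and from $X$ to $T_1$. This is an infinite structured family, and the proof combines a locating set of $D\setminus X$ (using the floor bound above) with a locating or locating-dominating set of the tournament $D[X]$, via four explicit cases depending on $|X|$ and $|V(T_r)|$. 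Your proposed ``every other block along the Hamiltonian cycle of $R$'' is too coarse to produce a locating set in general and does not interface with this decomposition.
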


Note that an edgeless graph of order $n$ is a local tournament for which $\gamma^{LD}(D) = n-1$. So if one removes the "connected" hypothesis in Theorem~\ref{th:main}, then the conclusion does not hold.

\

In \cite{Aline}, the authors asked if the general upper bound $\frac{4n}{5}+1$  for twin-free graphs
can be lowered to $\frac{2n}{3}$, for which some tight constructions are known (see Figure \ref{fig:tight} for the tight strongly connected example given in \cite{Aline}).
They proved that it is the case for quasi-twin-free acyclic digraphs. We extend these results by proving that if a quasi-twin-free digraph $D$ contains a {\em supervising vertex}, that is a vertex from which there exists a directed path to all the other vertices of the graph, then $\gamma^{LD}(D)\leq \frac{2n}{3}$. In particular, this bound is true (and asymptotically tight) for quasi-twin-free strongly connected digraphs and quasi-twin-free local-in semi-complete digraphs (i.e. digraphs, not necessarly simple, where two in-neighbours of a common vertex are connected).

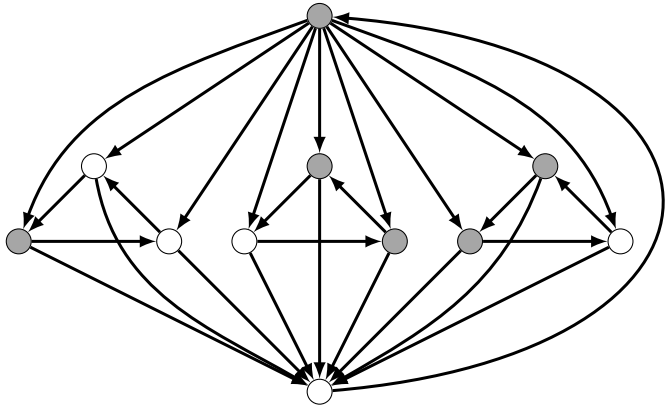
\begin{figure}[htpb!]
\begin{center}
\begin{tikzpicture}
\path (0,-1) node[draw,shape=circle] (s) {};
\path (0,4) node[draw,shape=circle,fill=gray!70] (t) {};

\path (1,1) node[draw,shape=circle,fill=gray!70] (b) {};
\path (-1,1) node[draw,shape=circle] (c) {};
\path (0,2) node[draw,shape=circle,fill=gray!70] (d) {};

\path (2,1) node[draw,shape=circle,fill=gray!70] (e) {};
\path (3,2) node[draw,shape=circle,fill=gray!70] (f) {};
\path (4,1) node[draw,shape=circle] (g) {};

\path (-4,1) node[draw,shape=circle,fill=gray!70] (h) {};
\path (-3,2) node[draw,shape=circle] (i) {};
\path (-2,1) node[draw,shape=circle] (j) {};

\draw[line width=0.4mm,>=latex,->] (s) .. controls +(6,0.5) and +(6,-0.5) .. (t);

\draw[line width=0.4mm,>=latex,->] (c) -- (b);
\draw[line width=0.4mm,>=latex,->] (b) -- (d);
\draw[line width=0.4mm,>=latex,->](d)--(c) ;

\draw[line width=0.4mm,>=latex,->](e) -- (g);
\draw[line width=0.4mm,>=latex,->](g) -- (f);
\draw[line width=0.4mm,>=latex,->](f) -- (e);

\draw[line width=0.4mm,>=latex,->](h) -- (j);
\draw[line width=0.4mm,>=latex,->](j) -- (i);
\draw[line width=0.4mm,>=latex,->](i) -- (h);

\draw[line width=0.4mm,>=latex,->](b) -- (s);
\draw[line width=0.4mm,>=latex,->](c) -- (s);
\draw[line width=0.4mm,>=latex,->](d) -- (s);
\draw[line width=0.4mm,>=latex,->](e) -- (s);
\path[line width=0.4mm,>=latex,->,draw](f) to[out=-110,in=30] (s);
\draw[line width=0.4mm,>=latex,->](g) -- (s);
\draw[line width=0.4mm,>=latex,->](h) -- (s);
\path[line width=0.4mm,>=latex,->,draw](i) to[out=-80,in=150] (s);
\draw[line width=0.4mm,>=latex,->](j) -- (s);

\draw[line width=0.4mm,>=latex,->](t) -- (b) ;
\draw[line width=0.4mm,>=latex,->](t) -- (c) ;
\draw[line width=0.4mm,>=latex,->](t) -- (d);
\draw[line width=0.4mm,>=latex,->](t) -- (e) ;
\draw[line width=0.4mm,>=latex,->](t) -- (f) ;
\path[line width=0.4mm,>=latex,->,draw](t) to[out=-20,in=110] (g) ;
\path[line width=0.4mm,>=latex,->,draw](t) to[out=-160,in=70] (h) ;
\draw[line width=0.4mm,>=latex,->](t) -- (i) ;
\draw[line width=0.4mm,>=latex,->](t) -- (j) ;

\end{tikzpicture}
\end{center}
\caption{A strongly connected twin-free and quasi-twin-free digraph of order $n$ with location-domination number $\frac{2(n-2)}{3}$, with a locating-dominating set in gray.}
\label{fig:tight}
\end{figure}

\
After giving some preliminary results on local tournaments in Section \ref{sec:preli}, we prove Theorem \ref{th:main} in two steps in Sections \ref{sec:round} and \ref{sec:notround}. In Section \ref{sec:supervising}, we prove the general upper bound for digraphs that contain a supervising vertex.

\section{Preliminaries}\label{sec:preli}
 
 In order to prove the upper bound on the location-domination number of local tournaments, we start by exposing some useful properties of local tournaments.

\begin{lemma}
A connected local tournament is twin-free.
\end{lemma}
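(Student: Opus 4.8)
The plan is to argue by contradiction: assume $x \neq y$ are twins in a connected local tournament $D$, and split into the two cases coming from the definition of twins. The case $N^{-}[x] = N^{-}[y]$ is immediate and does not even use connectivity: since $x \neq y$, from $y \in N^{-}[y] = N^{-}[x]$ we get $y \to x$, and symmetrically $x \to y$, so both arcs between $x$ and $y$ are present, contradicting the fact that a local tournament is simple.

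The substantive case is $N^{-}(x) = N^{-}(y)$. First I would note that there is no arc between $x$ and $y$ at all: an arc $x \to y$ would place $x \in N^{-}(y) = N^{-}(x)$, a loop (and symmetrically for $y \to x$). Then I would use the defining property of local tournaments to upgrade this to $N^{-}(x) = N^{-}(y) = \emptyset$: if $z$ were a common in-neighbour, then $x, y \in N^{+}(z)$, and since $N^{+}(z)$ induces a tournament there would be an arc between $x$ and $y$, contradicting the previous observation. Hence $x$ and $y$ are both sources.

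At this point connectivity enters. As $x \neq y$ and $D$ is connected, there is an underlying path from $x$ to $y$; take a shortest one $x = v_0, v_1, \dots, v_k = y$ with $k \geq 1$. The crux is to show that this path is actually directed, $v_0 \to v_1 \to \cdots \to v_k$, by induction on $i$: the first arc must leave $v_0$ because $x$ is a source; and if $v_{i-1} \to v_i$ but the arc between $v_i$ and $v_{i+1}$ were oriented towards $v_i$, then $v_{i-1}$ and $v_{i+1}$ would both lie in the tournament induced by $N^{-}(v_i)$ and hence be adjacent in $D$ --- impossible, since two vertices two steps apart on a shortest path cannot be adjacent. Once the path is known to be directed, $v_{k-1} \to v_k = y$ contradicts $y$ being a source, finishing the argument.

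The only mildly delicate step is this last induction, where one must anchor the base case via the source property of $x$ and invoke the standard shortest-path chordlessness fact correctly; the twin case-split and the elimination of arcs and of common in-neighbours between $x$ and $y$ are otherwise routine consequences of looplessness, simplicity, and the local-tournament axiom.
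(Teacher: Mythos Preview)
Your proof is correct and follows essentially the same approach as the paper: after disposing of the closed-twin case via simplicity, you reduce to the situation where $x$ and $y$ are both sources and analyse a shortest undirected $x$--$y$ path using the local-tournament property on $N^{-}(v_i)$ to forbid two consecutive arcs pointing into the same $v_i$. The only cosmetic difference is that you run a forward induction along the path to force it to be fully directed, whereas the paper instead picks the first index $i$ with $v_{i-1}\to v_i\leftarrow v_{i+1}$; these are the same argument.
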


\begin{proof}
Consider a connected local tournament $D$ and suppose by contradiction that there exist two distinct vertices $x,y$ of $D$ such that $x,y$ are twins. 

Suppose first that $x,y$ have some in-neighbours, then they have a common in-neighbour $z$. Since $x,y$ are both in the out-neighbourhood of $z$, there is an arc between them, a contradiction.

Suppose now that $x,y$ have no in-neighbours. 
Since $D$ is connected, we can consider a shortest (non necessarily directed) path $P=v_1,\ldots,v_k$ between $x,y$, such that $v_1=x$ and $v_k=y$. Note that $k\geq 3$.
By assumption, we have $v_1\to v_2$ and $v_k\to v_{k-1}$. Thus there exists $1<i<k$ such that $v_{i-1}\to v_i$ and $v_{i+1}\to v_i$. So $v_{i-1}$ and $v_{i+1}$ are both in the in-neighbourhood of $v_i$ and thus there is an arc between them, contradicting the fact that $P$ is a shortest path.
\end{proof}

The structure of local tournaments has been studied in~\cite{structure}.
In particular, the authors introduce the \emph{round decomposition}, which will be crucial in our proofs.
Therefore, let us have a closer look at the construction of such a decomposition.

We call a digraph $D$ on $r$ vertices \emph{round} if its vertices can be labelled $v_1,\ldots,v_{r}$ in such a way that $N^-(v_i)=\{v_{i-d^-(v_i)},\ldots,v_{i-1}\}$ and $N^+(v_i)=\{v_{i+1},\ldots,v_{i+d^+(v_i)}\}$ (indices are understood modulo $r$). One can easily check that a simple round digraph is a local tournament.

Let $R$ be a digraph on $r$ vertices and $L_1,\ldots,L_r$ be a collection of $r$ digraphs.
Then $R[L_1,\ldots,L_r]$ is the digraph obtained from $R$ by replacing each vertex $v_i$ of $R$ with $L_i$, and adding an arc from every vertex of $L_i$ to every vertex of $L_j$ if and only if $v_i\to v_j$ is an arc of $R$. 
A local tournament $D$ is said to be \emph{roundable} if it can be written as $R[T_1,\ldots,T_r]$ where $R$ is a round simple digraph on $r\geq 2$ vertices, and $T_1$,\ldots,$T_r$ are (non empty) strong tournaments. 
The decomposition $R[T_1,\ldots,T_r]$ is called a \emph{round decomposition} of $D$. 

Given a round decomposition $R[T_1,\ldots,T_r]$ of a roundable local tournament $D$,
if there is an arc between a vertex in $T_i$ and a vertex in $T_j$, then all the arcs between $T_i$ and $T_j$ are present in $D$, and we write $T_i \Rightarrow T_j$.
If $T_i \Rightarrow T_j$ and $i< j$, then for all $k\in\{i+1,\ldots,j-1\}$, we have $T_i\Rightarrow T_k$ and $T_k \Rightarrow T_j$.
If $T_i \Rightarrow T_j$ and $j< i$, then for all $k\in\{1,\ldots,j-1\}\cup\{i+1,\ldots,s\}$, we have $T_i\Rightarrow T_k$ and $T_k \Rightarrow T_j$.

Suppose that $D$ is a roundable connected local tournament that is not strongly connected. Then, as proved in~\cite{structure},  
there is a unique round decomposition $R[T_1,\ldots,T_r]$ of $D$ such that $T_i \Rightarrow T_j$ only happens if $i< j$. We often consider this particular round decomposition in this case, that we call the \emph{canonical}  round decomposition of $D$.
When $D$ is a roundable strong local tournament, there is a unique round decomposition up to cyclic permutation.

\section{Roundable local tournaments}
\label{sec:round}

This section is devoted to the study of roundable connected local tournaments.
Consider a roundable connected local tournament $D$ and a round decomposition $R[T_1,\ldots,T_r]$ of $D$  that is canonical if $D$ is not strong.

\begin{lemma}
\label{lem:next}
For $1\leq i< r$, we have  $T_i\Rightarrow T_{i+1}$. 
\end{lemma}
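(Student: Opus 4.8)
The plan is to use the consecutivity structure of a round decomposition together with the connectivity of $D$. Recall the $\Rightarrow$ relation on the blocks $T_1,\dots,T_r$: if $T_i\Rightarrow T_j$ with $i<j$, then all intermediate blocks $T_k$ (for $i<k<j$) satisfy $T_i\Rightarrow T_k$ and $T_k\Rightarrow T_j$, and in the canonical decomposition (when $D$ is not strong) every arc between distinct blocks goes from a lower index to a higher index. So the first step is to argue that, for each $i<r$, \emph{some} arc must connect the block $T_i$ (or a lower-indexed block) to $T_{i+1}$ (or a higher-indexed block), purely because $D$ is connected: the sets $\{T_1,\dots,T_i\}$ and $\{T_{i+1},\dots,T_r\}$ partition the vertex set, so there is at least one arc between these two parts. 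That arc is $T_a\Rightarrow T_b$ or $T_b\Rightarrow T_a$ for some $a\le i<i+1\le b$.

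The second step handles the non-strong case directly: there, the canonical decomposition forces the arc found above to go from lower to higher index, so we have $T_a\Rightarrow T_b$ with $a\le i<b$. If $a=i$ and $b=i+1$ we are done. Otherwise the "interval" property quoted in the preliminaries applies: since $a\le i<i+1\le b$ with $a<b$, the index $i$ lies strictly between $a$ and $b$ (or equals $a$), and likewise $i+1$ lies in $(a,b]$; applying $T_a\Rightarrow T_b \Rightarrow (\text{all }T_k,\ a<k<b)$ and $(\text{all }T_k,\ a<k<b)\Rightarrow T_b$ repeatedly, one extracts $T_i\Rightarrow T_{i+1}$. Concretely: from $T_a\Rightarrow T_b$ we get $T_i\Rightarrow T_b$ (taking $k=i$ if $a<i<b$; if $a=i$ it is immediate), and also $T_a\Rightarrow T_{i+1}$, hence in particular $T_i\Rightarrow T_b\Rightarrow T_{i+1}$ gives nothing by itself—so instead I note $T_i\Rightarrow T_b$ together with $i<i+1\le b$ yields $T_i\Rightarrow T_{i+1}$ by the interval property applied with $k=i+1$ (valid since $i<i+1\le b$; if $i+1=b$ it is the hypothesis). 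This chain of interval-closure steps is the mild bookkeeping part.

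The third step handles the strong case. When $D$ is strongly connected and roundable, $R$ is a round digraph on $r\ge 2$ vertices, and roundness means the out-neighbourhood of $v_i$ in $R$ is a consecutive block $\{v_{i+1},\dots,v_{i+d^+(v_i)}\}$ (indices mod $r$) — and since $D$ is strong, $R$ is strong, so $d^+(v_i)\ge 1$ for every $i$, i.e. $v_i\to v_{i+1}$ in $R$ for all $i$, which is exactly $T_i\Rightarrow T_{i+1}$; restricting to $1\le i<r$ gives the claim (here we don't even need the "$i<j$" normalization, only roundness and strong connectivity). Actually, more carefully: in the strong roundable case the decomposition is unique only up to cyclic rotation, but once a labelling $v_1,\dots,v_r$ witnessing roundness is fixed, $N^+(v_i)$ being a nonempty consecutive set starting at $v_{i+1}$ forces $v_i\to v_{i+1}$.

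The main obstacle, and really the only subtle point, is the index-chasing in the non-strong case: correctly invoking the interval-closure property of $\Rightarrow$ to propagate a "long" arc $T_a\Rightarrow T_b$ down to the specific consecutive pair $(T_i,T_{i+1})$, making sure the boundary cases $a=i$ and $b=i+1$ are covered and that one never needs an arc going the "wrong" way (which the canonical normalization forbids). Everything else is a direct unpacking of the definitions of roundness, connectivity, and the $\Rightarrow$ relation stated in Section~\ref{sec:preli}.
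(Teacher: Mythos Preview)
Your argument is correct, but it is organized differently from the paper's. The paper gives a single contradiction argument working directly in $R$: if $T_i\not\Rightarrow T_{i+1}$, then $v_i\to v_{i+1}$ is absent in $R$, and by the definition of roundness this forces $d^+(v_i)=0$. From there, roundness alone gives that no arc $v_k\to v_\ell$ with $k\le i<\ell$ exists; since $v_i$ is a sink, $R$ (hence $D$) is not strong, so the decomposition is canonical and there are no arcs the other way either, contradicting connectivity of $D$. In particular, the paper never splits into strong/non-strong cases up front; the non-strongness is \emph{derived} inside the contradiction.

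You instead handle the two cases separately: in the strong case you observe that $R$ is strong, hence every $v_i$ has out-degree at least $1$, and roundness gives $v_i\to v_{i+1}$ directly; in the non-strong case you use connectivity to find some cross arc $T_a\Rightarrow T_b$ with $a\le i<b$ (forward by canonicity) and then invoke the interval-closure property from Section~\ref{sec:preli} twice to squeeze out $T_i\Rightarrow T_{i+1}$. This is fine and the index bookkeeping you describe is sound (your ``first get $T_i\Rightarrow T_b$, then apply the interval property with $k=i+1$'' is exactly right, including the boundary cases $a=i$ and $b=i+1$). The trade-off is that your proof is longer and leans on a derived property (the interval closure) rather than the raw round definition, whereas the paper's proof is shorter, uniform across both cases, and shows slightly more: it pins down precisely that failure of $T_i\Rightarrow T_{i+1}$ would force $v_i$ to be a sink in $R$.
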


\begin{proof}
Suppose by contradiction that there exists $1\leq i< r$ such that we don't have  $T_i\Rightarrow T_{i+1}$. Then there is no arc $v_i \rightarrow v_{i+1}$ in $R$ and so by definition of round, we have $d^+(v_i)=0$. 
For $1\leq k\leq i< \ell \leq r$, there is no arc $v_k\rightarrow v_l$ as such an arc implies an arc $v_i\rightarrow v_\ell$. 
Since $v_i$ is a sink, the digraph $D$ is not strong and the decomposition is canonical.
So there is no edge between $V(T_1)\cup \cdots \cup V(T_i)$ and $V(T_{i+1})\cup \cdots \cup V(T_r)$, contradicting the fact that $D$ is connected.
\end{proof}

We define the sequence of integers $(i_k)_{0\leq k\leq t}$ as follows (see Figure~\ref{fig:decomposition}).
Let $i_0=0$.
If $i_k$ is defined and $i_k<r$, then we define $i_{k+1}$ as the greatest integer $i_{k}+1< i_{k+1}\leq r$ such that $T_{i_{k}+1} \Rightarrow T_{i_{k+1}}$ (note that this integer exists by Lemma~\ref{lem:next}).
This procedure stops when some $i_t$ is defined equal to $r$.
For $1\leq k\leq t$, 
let $D_k=D[T_{i_{k-1}+1}\cup\ldots\cup T_{i_{k}}]$.
Note that $D_k$ is a tournament for all $k$.

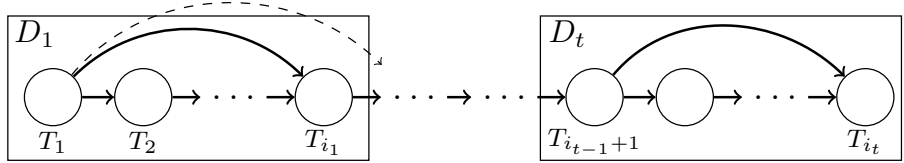
\begin{figure}[!h]
\begin{center}
\scalebox{1.2}{\begin{tikzpicture}
\foreach \x in {0,1,3,6,7,9}
{\node[circle,minimum size=18,draw](\x) at (\x,0) {};}
\foreach \x in {2,4,5,8}
{\node[minimum size=20](\x)  at (\x,0) {\large $\ldots$};}

\node at (0,-0.5) {\footnotesize $T_1$};
\node at (1,-0.5) {\footnotesize $T_2$};
\node at (3,-0.5) {\footnotesize $T_{i_1}$};
\node at (6,-0.5) {\footnotesize $T_{i_{t-1}+1}$};
\node at (9,-0.5) {\footnotesize $T_{i_t}$};
\path[->,draw,thick] (0) to (1);
\path[->,draw,thick] (0) to[out=45,in=135] (3);
\path[->,draw,dashed] (0) to[out=50,in=135] (4);
\path[->,draw,thick] (1) to (2);
\path[->,draw,thick] (2) to (3);
\path[->,draw,thick] (3) to (4);
\path[->,draw,thick] (4) to (5);
\path[->,draw,thick] (5) to (6);
\path[->,draw,thick] (6) to (7);
\path[->,draw,thick] (6) to[out=50,in=135] (9);
\path[->,draw,thick] (7) to (8);
\path[->,draw,thick] (8) to (9);

\draw (-0.5,-0.7) rectangle (3.5,0.9);
\node at (-0.2,0.7) {$D_1$};

\draw (5.4,-0.7) rectangle (9.4,0.9);
\node at (5.7,0.7) {$D_t$};
\end{tikzpicture}}
\end{center}
\caption{Illustration of the definition of the subgraphs $D_k$. The dashed arc means that there is no arc between the subgraphs, whereas thick arcs means that there are all the arcs between the two subgraphs.}
\label{fig:decomposition}
\end{figure}

Step by step, from $k=t$ to $k=1$, we define a set $S$ of vertices of $D$ that will be our candidate for a locating-dominating set of $D$. We consider the following three different cases:
\begin{itemize}
\item \emph{Case 1: $k=t$, or $k<t$ and $S\cap V(D_{k+1})$ is assumed to be a dominating set of $D_{k+1}$.}

If $|V(D_k)|$ is even, we add to $S$ a minimum locating-dominating set of  $D_k$, otherwise we add to $S$ a minimum locating set of $D_k$.

\item \emph{Case 2: $k<t$, the set $S\cap V(D_{k+1})$ is not assumed to be a dominating set of $D_{k+1}$, and $|V(T_{i_k})|=1$.}

Let $v$ be the unique vertex of $T_{i_k}$ and $D'_k=D_k\setminus\{v\}$. We add $v$ to $S$. Moreover we add to $S$ a minimum locating-dominating set of $D'_k$ when $|V(D'_k)|$ is even, or a minimum locating set of $D'_k$ when $|V(D'_k)|$ is odd.
(Note that the set of $D_k$ that is added to $S$, including $v$, is a dominating set of $D_k$ when $|V(D'_k)|$ is odd.)

\item \emph{Case 3: $k<t$, the set $S\cap V(D_{k+1})$ is not assumed to be a dominating set of $D_{k+1}$, and $|V(T_{i_k})|>1$.}

We add to $S$ a minimum locating-dominating set of $D_k$.
\end{itemize}

Note that in all the three cases we add to $S$ a set of vertices of $D_k$ that is a locating set of $D_k$, even in Case~2 since a locating set of $D'_k$ plus $v$ forms a locating set of $D_k$.

In the next lemmas we prove some properties of the set $S$.

\begin{lemma}
\label{claim:sizetwoset}
For $1\leq i \leq r$, the set $S\cap V(T_i)$ is a locating set of $T_i$, in particular if $T_i$ has at least 2 vertices, then $S\cap V(T_i)\neq \emptyset$.
\end{lemma}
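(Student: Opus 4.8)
The plan is to analyze the construction of $S$ case by case and show that in each of the three cases, the portion of $S$ lying inside a fixed tournament $T_i$ forms a locating set of $T_i$. The key observation is that every tournament $T_i$ is contained in exactly one of the subgraphs $D_k$ (namely for the unique $k$ with $i_{k-1}+1 \le i \le i_k$), so it suffices to understand how $S \cap V(D_k)$ restricts to each $T_i$ inside $D_k$. First I would recall that in each of Cases~1, 2, and 3 the set $X_k := S \cap V(D_k)$ is, by the remark immediately preceding this lemma, a locating set of the tournament $D_k$. So the real content is a general fact about tournaments of the form considered here: if $X$ is a locating set of the tournament $D_k = D[T_{i_{k-1}+1} \cup \dots \cup T_{i_k}]$, then for each constituent $T_i$, the set $X \cap V(T_i)$ is a locating set of $T_i$.

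To prove that general fact, I would take two distinct vertices $x, y \in V(T_i) \setminus X$ and use that $X$ is a locating set of $D_k$ to find $s \in X$ dominating exactly one of $x, y$ in $D_k$. The main step is to argue that such an $s$ can be taken inside $V(T_i)$: if $s \in V(T_j)$ with $j \ne i$, then since $x$ and $y$ lie in the same block $T_i$, the relation $T_j \Rightarrow T_i$ (or its negation) holds uniformly — either $s$ dominates both $x$ and $y$, or $s$ dominates neither — so such an $s$ cannot separate $x$ from $y$. Hence the separating vertex $s$ must already belong to $V(T_i) \cap X$, which is exactly what is needed. This uses the "blow-up" structure of the round decomposition: arcs between different blocks $T_j, T_i$ are all-or-nothing.

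The "in particular" clause then follows immediately: if $T_i$ has at least two vertices and $S \cap V(T_i) = \emptyset$, pick any two distinct vertices $x, y \in V(T_i)$; they lie outside $S \cap V(T_i)$, yet no vertex of $S \cap V(T_i) = \emptyset$ can separate them, contradicting that $S \cap V(T_i)$ is locating in $T_i$. (One should double-check the degenerate case $|V(T_i)| \le 1$, where the statement is vacuous since there is no pair to separate.)

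I expect the main obstacle to be bookkeeping rather than conceptual: one must carefully confirm that the "all-or-nothing between blocks" property applies here — i.e., that whenever $s \in V(T_j)$ and $x,y \in V(T_i)$ with $j \ne i$, the arcs from $s$ to $x$ and from $s$ to $y$ are either both present or both absent. This is exactly the definition of $R[T_1,\dots,T_r]$, so the argument is clean, but it must be invoked explicitly. A secondary point to verify is that the remark preceding the lemma genuinely guarantees $X_k = S \cap V(D_k)$ is locating in $D_k$ in all three cases (including Case~2, where it is a locating set of $D'_k$ together with the extra vertex $v \in T_{i_k}$); this is stated in the excerpt and may be assumed.
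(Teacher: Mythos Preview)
Your proposal is correct and follows essentially the same approach as the paper: use that $S\cap V(D_k)$ is a locating set of $D_k$, pick two vertices $x,y\in V(T_i)\setminus S$, find a separating $s\in S\cap V(D_k)$, and then argue from the all-or-nothing structure of the round decomposition that $s$ must lie in $V(T_i)$. The only minor difference is that the paper dispatches the degenerate case $|V(T_i)\setminus S|\leq 1$ explicitly at the outset (rather than $|V(T_i)|\leq 1$), which is the precise trivial case you need; otherwise the arguments coincide.
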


\begin{proof}
Let $1\leq i \leq r$.
If $V(T_i)\setminus S$ has size $1$, then $S\cap V(T_i)$ is a locating set of $T_i$  (even if empty) and the conclusion holds. So we can assume that $V(T_i)\setminus S$ has size at least $2$.
Let $x$, $y$ be two distinct vertices of $V(T_i) \setminus S$.
Since $S$ is a locating set of $D_k$, there exists a vertex $s$ in $S\cap V(D_k)$ that dominates exactly one vertex among $x$ and $y$.
Since $R[T_1,\ldots,T_r]$ is a round decomposition, all the vertices of $D_k$ not in $T_i$ dominates either all or none of the vertices of $T_i$. So $s\in V(T_i)$. Therefore $S\cap V(T_i)$ is a locating set of $T_i$ and $S\cap V(T_i)\neq \emptyset$.
\end{proof}

\begin{lemma}
\label{claim:firstset}
For  $1\leq k\leq t$, if $S\cap V(D_k)$ is a dominating set of $D_k$, then $S\cap V(T_{i_{k-1}+1})\neq \emptyset$. Otherwise, there is exactly one vertex of $D_k$ that is not dominated by $S\cap V(D_k)$ and this vertex is in $V(T_{i_{k-1}+1})$.
\end{lemma}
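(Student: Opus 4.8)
### Proof strategy

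The plan is to analyze the three cases of the construction separately, since the claim is really a bookkeeping statement about which of the three branches was used at step $k$ and what it guarantees. In each case we use the fact, established right after the three cases, that $S\cap V(D_k)$ is always a locating set of $D_k$, and we recall that a locating set of a digraph fails to dominate at most one vertex (this is the chain $\gamma^L\le\gamma^{LD}\le\gamma^L+1$ remark from the introduction). So for each $k$ there is either zero or exactly one vertex of $D_k$ not dominated by $S\cap V(D_k)$; the content of the lemma is (a) pinning down which of these two alternatives occurs, and (b) showing that the possibly-undominated vertex, and a nonempty slice of $S$ when everything is dominated, both land in the first block $T_{i_{k-1}+1}$.

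First I would handle Case~1. Here either $|V(D_k)|$ is even, in which case we added a \emph{locating-dominating} set of $D_k$, so $S\cap V(D_k)$ dominates all of $D_k$; or $|V(D_k)|$ is odd, in which case we added only a \emph{minimum locating} set, and I claim exactly one vertex is undominated. The subtle point is that a minimum locating set of an odd-order tournament genuinely leaves a vertex undominated — this should follow because a tournament on an even number of vertices has no isolated-in-neighbourhood issues forcing domination, but more carefully: if the minimum locating set of $D_k$ were already dominating, then $|V(D_k)|$ even versus odd would not matter, so I would need the earlier preliminary results (or the tournament bound $\gamma^{LD}\le\lceil n/2\rceil$ from \cite{Aline} together with a parity/size argument) to guarantee that in the odd case the locating set we chose is strictly smaller than any locating-dominating set, hence non-dominating. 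Then, in either subcase, I must show the relevant vertex (or a nonempty part of $S$) sits in $T_{i_{k-1}+1}$: here I use Lemma~\ref{lem:next} and the round structure. By the definition of $i_k$ as the \emph{greatest} index with $T_{i_{k-1}+1}\Rightarrow T_{i_k}$, the block $T_{i_{k-1}+1}$ dominates every block of $D_k$ except possibly some block $T_j$ with $i_{k-1}+1 < j \le i_k$ that $T_{i_{k-1}+1}$ does not reach — but by roundedness, if $T_{i_{k-1}+1}$ does not dominate $T_j$ then nothing in $D_k$ before $T_j$ dominates it either... I must be careful and instead argue the other direction: a vertex $v$ of $D_k$ with no in-neighbour in $S\cap V(D_k)$ must have no in-neighbour in $D_k$ at all among blocks that contain an $S$-vertex; combining with Lemma~\ref{claim:sizetwoset} (every $T_i$ with $\ge 2$ vertices meets $S$) forces $v$ to be the unique vertex of its block, and the chain structure of Lemma~\ref{lem:next} ($T_i\Rightarrow T_{i+1}$ for consecutive blocks) then pushes $v$ all the way back to the first block $T_{i_{k-1}+1}$.

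For Case~2, the parenthetical remark in the construction already tells us that when $|V(D'_k)|$ is odd the set added (including $v$) dominates $D_k$, so there the conclusion is "$S\cap V(D_k)$ is dominating" and I just need $S\cap V(T_{i_{k-1}+1})\ne\emptyset$, which follows as above (or from Lemma~\ref{claim:sizetwoset} if that block is big, and from the explicit construction pushing an undominated vertex back otherwise). When $|V(D'_k)|$ is even we added $v$ plus a locating set of $D'_k=D_k\setminus\{v\}$; since $v$ dominates... well, $v$ is the unique vertex of the \emph{last} block $T_{i_k}$, so $v$ is a sink of $D_k$ and dominates nothing, hence $v$'s presence does not help domination and we are in the non-dominating subcase with exactly one undominated vertex, which again must be pushed back to $T_{i_{k-1}+1}$ by the chain argument. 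Case~3 is the easiest: we added a locating-\emph{dominating} set of $D_k$, so $S\cap V(D_k)$ dominates $D_k$ and we only owe $S\cap V(T_{i_{k-1}+1})\ne\emptyset$; since $|V(T_{i_k})|>1$ we could also invoke Lemma~\ref{claim:sizetwoset} for that block, but the first block might be a singleton, so I again use that any locating-dominating set of $D_k$ must contain the in-neighbourhood-minimal vertex, which by Lemma~\ref{lem:next} lies in $T_{i_{k-1}+1}$.

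The main obstacle I expect is the uniform argument that "an undominated vertex of $D_k$ under a locating set of $D_k$ must lie in $T_{i_{k-1}+1}$": this needs the combination of (i) the round/chain structure from Lemma~\ref{lem:next}, (ii) the fact from Lemma~\ref{claim:sizetwoset} that every block of size $\ge 2$ meets $S$, and (iii) a careful use of the definition of the $i_k$'s. The parity bookkeeping (exactly one undominated vertex in the odd cases, versus zero in the even/LD cases) is the other place where one must be precise, but it is essentially forced by the $\gamma^L\le\gamma^{LD}\le\gamma^L+1$ inequality together with knowing whether we added a locating or a locating-dominating set.
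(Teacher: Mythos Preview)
Your plan is more complicated than necessary and rests on a misreading of the statement. The lemma is a pure dichotomy: \emph{if} $S\cap V(D_k)$ dominates $D_k$ then conclusion~(a), \emph{otherwise} conclusion~(b). It never asks you to determine which alternative actually occurs for each of Cases~1--3, so the whole construction-case analysis, including the parity claims such as ``a minimum locating set of an odd-order tournament necessarily leaves one vertex undominated'', is superfluous --- and that particular claim need not be true, since nothing prevents the chosen minimum locating set from happening to dominate.

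The structural fact you need is exactly the one you write down and then hedge away. By the definition of $i_k$ together with the round property recalled just before Lemma~\ref{lem:next} (if $T_i\Rightarrow T_j$ with $i<j$ then $T_i\Rightarrow T_m$ for all $i<m<j$), the first block $T_{i_{k-1}+1}$ dominates \emph{every} other block of $D_k$; there is no ``except possibly''. Since $R$ is simple, no arc enters $T_{i_{k-1}+1}$ from the rest of $D_k$. The paper's proof is then a few lines and uses neither the case split nor Lemma~\ref{claim:sizetwoset}: if $S\cap V(D_k)$ is dominating it must in particular dominate the vertices of $T_{i_{k-1}+1}$, and only vertices of that block can do so, hence $S\cap V(T_{i_{k-1}+1})\neq\emptyset$. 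Otherwise, since $S\cap V(D_k)$ is locating, exactly one vertex $v$ is undominated; either $S$ meets $T_{i_{k-1}+1}$ (and any such vertex dominates all other blocks, forcing $v\in T_{i_{k-1}+1}$), or $S$ misses $T_{i_{k-1}+1}$ entirely (whose vertices are then all undominated, so $|T_{i_{k-1}+1}|=1$ and $v$ is its unique vertex).

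Your fallback ``chain'' argument (push $v$ back block by block via Lemmas~\ref{lem:next} and~\ref{claim:sizetwoset}) also has a gap as written: learning that the preceding block $T_{j-1}$ is a singleton not in $S$ does not move $v$ itself to an earlier block, and nothing in your sketch explains why iterating this eventually forces $v$ into $T_{i_{k-1}+1}$ without appealing to the stronger fact above.
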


\begin{proof}
Let  $1\leq k\leq t$. 
By definition of $i_k$,
all the vertices of $T_{i_{k-1}+1}$ dominates $D_k\setminus V(T_{i_{k-1}+1})$, 
so there is no arc from $V(D_k)\setminus V(T_{i_{k-1}+1})$ to $V(T_{i_{k-1}+1})$ (since $R$ is simple). 
Thus, if $S\cap V(D_k)$ is a dominating set of $D_k$, then $S$ must contains a vertex of $V(T_{i_{k-1}+1})$.

Suppose now that there exists a vertex of 
$D_k$ that is not dominated by $S\cap V(D_k)$.
Since $S\cap V(D_k)$ is a locating set of $D_k$, there is at most one vertex that is not dominated by $S\cap V(D_k)$, and thus exactly one such vertex $v$. 
If
there exists a vertex $u$ in
$S\cap V(T_{i_{k-1}+1})$, then $u$ dominates $D_k\setminus V(T_{i_{k-1}+1})$, so $v\in V(T_{i_{k-1}+1})$.
If there is no vertices in $S\cap V(T_{i_{k-1}+1})$, then the vertices of $T_{i_{k-1}+1}$ are not dominated by $S\cap V(D_k)$, so $V(T_{i_{k-1}+1})=\{v\}$.
\end{proof}

\begin{lemma}\label{claim:dominating}
There is at most one vertex of $D$ that is not dominated by $S$, and if it exists, it is a vertex of $T_1$.
\end{lemma}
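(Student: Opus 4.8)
The plan is to argue directly, one block-group $D_k$ at a time, with essentially no induction. The only structural input is Lemma~\ref{claim:firstset}, which applies to every $D_k$ because, as recorded just after the construction, $S\cap V(D_k)$ is always a locating set of the tournament $D_k$: it says that $S\cap V(D_k)$ dominates all of $D_k$ except at most one vertex, and that this vertex, call it $x_k$, lies in the first block $T_{i_{k-1}+1}$ of $D_k$ whenever it exists. Since $V(D)$ is partitioned into $V(D_1),\dots,V(D_t)$ and a vertex dominated by $S\cap V(D_k)$ is a fortiori dominated by $S$, the only vertices of $D$ that can fail to be dominated by $S$ are among $x_1,\dots,x_t$ (at most one per block-group). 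Hence it is enough to prove that for every $k\ge 2$ the vertex $x_k$ is dominated by $S$: then only $x_1$ can survive, and $x_1\in T_{i_0+1}=T_1$, which gives both assertions of the lemma.

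So the real work is to show, for $k\ge 2$, that $x_k$ has an in-neighbour in $S$. Since $1\le k-1<t$, we have $1\le i_{k-1}<r$, so by Lemma~\ref{lem:next} we get $T_{i_{k-1}}\Rightarrow T_{i_{k-1}+1}$; thus every vertex of the last block $T_{i_{k-1}}$ of $D_{k-1}$ dominates $x_k$, and it suffices to exhibit one vertex of $T_{i_{k-1}}$ lying in $S$. If $|V(T_{i_{k-1}})|\ge 2$, Lemma~\ref{claim:sizetwoset} provides one immediately. If $|V(T_{i_{k-1}})|=1$, I would read off the construction: the very existence of $x_k$ means $S\cap V(D_k)$ is \emph{not} a dominating set of $D_k$, so when $D_{k-1}$ was processed the algorithm was not in Case~1 (recall $k-1<t$); and as $|V(T_{i_{k-1}})|=1$ it was not in Case~3 either, so it was in Case~2, which by definition adds the unique vertex of $T_{i_{k-1}}$ to $S$. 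In both situations $S$ meets $T_{i_{k-1}}$, so $x_k$ is dominated.

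The step I expect to be the main obstacle is precisely this last sub-case $|V(T_{i_{k-1}})|=1$: one has to pin down which of the three cases the construction applied to $D_{k-1}$, and the argument hinges on the small but essential observation that a singleton last block of $D_{k-1}$ together with ``$S\cap V(D_k)$ not dominating $D_k$'' forces Case~2, whose whole point is to insert that singleton into $S$. Everything else is a direct appeal to Lemmas~\ref{lem:next}, \ref{claim:sizetwoset} and \ref{claim:firstset}. I would also note that this argument nowhere uses whether $D$ is strongly connected, so it covers the canonical (non-strong) round decomposition and the strong one at once.
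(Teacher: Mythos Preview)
Your proof is correct and follows essentially the same approach as the paper's: reduce via Lemma~\ref{claim:firstset} to the potential undominated vertex $x_k\in T_{i_{k-1}+1}$ of each $D_k$, and for $k\ge 2$ exhibit an $S$-vertex in $T_{i_{k-1}}$ using Lemma~\ref{claim:sizetwoset} when $|V(T_{i_{k-1}})|\ge 2$ and the Case~2 rule of the construction when $|V(T_{i_{k-1}})|=1$, then conclude by Lemma~\ref{lem:next}. Your write-up is in fact slightly more explicit than the paper's in justifying why step $k-1$ must fall into Case~2 in the singleton sub-case.
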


\begin{proof} 
Assume by contradiction that there exists a vertex $x$ of $D$ that is not dominated  by $S$ and $x$ is not in $D_1$. Let $2\leq k \leq t$ such that $x\in D_k$. By Lemma~\ref{claim:firstset}, vertex $x$ is in $T_{i_{k-1}+1}$.
If $T_{i_{k-1}}$ contains a unique vertex $v$, then, at step $k-1$ of the construction of $S$ we are in Case 2 and $v$ is added to $S$.
If $T_{i_{k-1}}$ contains at least two vertices, then, by Lemma~\ref{claim:sizetwoset}, it contains a vertex of $S$.
In both cases, $S\cap V(T_{i_{k-1}})\neq \emptyset$. 
By Lemma~\ref{lem:next}, we have $T_{i_{k-1}} \Rightarrow T_{i_{k-1}+1}$, so $x$ is dominated by $S$, a contradiction.

Thus all the vertices that are not dominated by $S$ are in $D_1$. By Lemma~\ref{claim:firstset}, there is at most one such vertex and if it exists it is a vertex of $T_1$.
\end{proof}

\begin{lemma}\label{claim:locating}
Suppose that there exists a pair $\{x,y\}$ of vertices of $V(D)\setminus S$ not located by $S$. Then, $D$ is strongly connected, 
and one of $x,y$ is in $T_1$ and is the only vertex of $D_1$ not dominated by $S\cap V(D_1)$.
\end{lemma}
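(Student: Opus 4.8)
The plan is to pin down where $x$ and $y$ sit in the round decomposition $R[T_1,\dots,T_r]$ and then exploit two structural facts over and over: that $S\cap V(D_k)$ is a locating set of the tournament $D_k$ for every $k$ (so any two vertices of $V(D)\setminus S$ lying in one $D_k$ are separated), and that between two distinct blocks either all arcs are present or none are, with the out- and in-neighbourhoods in the round digraph $R$ forming intervals. As a warm-up I would note that a pair $\{x,y\}\subseteq V(D)\setminus S$ fails to be located by $S$ exactly when $N^-(x)\cap S=N^-(y)\cap S$; write $A$ for this common set. By the locating property of $S$ on each $D_k$, $x$ and $y$ cannot lie in the same $D_k$, so after possibly swapping names we may assume $x\in V(D_j)$ and $y\in V(D_{j'})$ with $j<j'$. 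If $A=\emptyset$ then neither $x$ nor $y$ is dominated by $S$, contradicting Lemma~\ref{claim:dominating}; hence $A\neq\emptyset$ and both $x$ and $y$ are dominated by $S$.

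To prove $D$ is strongly connected I would argue by contradiction: assume it is not, so the decomposition is canonical and $T_p\Rightarrow T_q$ forces $p<q$. Then no vertex of $D_m$ with $m>j$ can dominate $x\in V(D_j)$, since that would require a ``backwards'' block-domination; so $A$ meets no $D_m$ with $m>j$, and in particular $y$ is not dominated by $S\cap V(D_{j'})$. By Lemma~\ref{claim:firstset}, $y$ is then the \emph{unique} vertex of $D_{j'}$ not dominated by $S\cap V(D_{j'})$, and it lies in the first block $T_{i_{j'-1}+1}$. Since $S\cap V(D_{j'})$ is not dominating, step $j'-1$ of the construction was a Case~2 or a Case~3 step, and in either case (directly in Case~2, via Lemma~\ref{claim:sizetwoset} in Case~3) $S\cap V(T_{i_{j'-1}})\neq\emptyset$; by Lemma~\ref{lem:next} any vertex of that set dominates $y$, hence belongs to $A$, hence cannot lie in a $D_m$ with $m>j$ — forcing $j'=j+1$. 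The same vertex then dominates $x$, which by the no-backwards rule forces $x\in V(T_{i_j})$ and $y\in V(T_{i_j+1})$.

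Now I would expand the equality $N^-(x)\cap S=N^-(y)\cap S$ block by block. Every block of $D_j$ that precedes $T_{i_j}$ satisfies $T_m\Rightarrow T_{i_j}$ (because $D_j$ is a tournament and the decomposition is canonical); for $m<i_j$, $T_m\Rightarrow T_{i_j+1}$ implies $T_m\Rightarrow T_{i_j}$ (out-neighbourhoods in $R$ are intervals); and $T_{i_{j-1}+1}\not\Rightarrow T_{i_j+1}$ by the maximality built into the definition of $i_j$. Comparing the contributions of the block $T_{i_{j-1}+1}$ to the two sides then forces $S\cap V(T_{i_{j-1}+1})=\emptyset$, hence $|T_{i_{j-1}+1}|=1$ by Lemma~\ref{claim:sizetwoset}, and Lemma~\ref{claim:firstset} now identifies the single vertex of $T_{i_{j-1}+1}$ as the unique vertex of $D_j$ not dominated by $S\cap V(D_j)$. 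This is exactly the configuration we had for $y$ inside $D_{j'}$, one index lower; iterating it (equivalently, choosing $j$ minimal among all bad pairs from the start) drives the ``unique undominated first-block vertex'' down to $D_1$, and confronting this with how the construction handles Case~2/Case~3 at the intervening steps, together with Lemma~\ref{claim:dominating}, yields the required contradiction. Hence $D$ is strongly connected.

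With $D$ strong, the round decomposition is cyclic, so the only place where the no-backwards argument above can break down is at the seam around $T_1$; rerunning the block-by-block comparison in this cyclic setting localises the non-dominated vertex of the bad pair to $T_1$, and the same comparison shows it is the unique vertex of $D_1$ not dominated by $S\cap V(D_1)$, which is the second assertion. I expect the main obstacle to be the block-by-block bookkeeping of the previous paragraph: for each block $T_m$ one must simultaneously track whether it dominates $x$, whether it dominates $y$, and whether $S$ meets it, and combine this with the maximality defining the $i_k$ and with Lemmas~\ref{claim:sizetwoset}--\ref{claim:dominating}. Single-vertex blocks, on which Lemma~\ref{claim:sizetwoset} says nothing, and making the descent terminate in a genuine contradiction rather than in a configuration the construction actually allows, are the points that will need the most care.
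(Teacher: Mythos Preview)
Your proposal assembles the right ingredients but misses the direct contradiction that the paper exploits, and this is exactly the ``making the descent terminate'' difficulty you yourself flag at the end.

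Here is the gap. In your non-strong analysis you obtain $j'=j+1$, $y\in T_{i_j+1}$, $x\in T_{i_j}$, and a vertex $s'\in S\cap V(T_{i_j})$. Since $x\notin S$ and $s'\in S$ both lie in $T_{i_j}$, you have $|V(T_{i_j})|\ge 2$. Combined with the fact (which you have already established) that $S\cap V(D_{j'})$ is not dominating, hence not \emph{assumed} dominating, this forces step $j$ of the construction to be \emph{Case~3}. In Case~3 the set $S\cap V(D_j)$ is a locating-\emph{dominating} set of $D_j$, so by Lemma~\ref{claim:firstset} we have $S\cap V(T_{i_{j-1}+1})\neq\emptyset$. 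Any such vertex dominates $x\in T_{i_j}$ (since $T_{i_{j-1}+1}\Rightarrow T_{i_j}$) but not $y\in T_{i_j+1}$ (by the maximality defining $i_j$), so it separates $x$ and $y$ --- contradiction. This is precisely the contradiction you were groping for; your own ``block comparison'' conclusion $S\cap V(T_{i_{j-1}+1})=\emptyset$ already clashes with it. The descent you propose is unnecessary, and as written it does not work: the single vertex of $T_{i_{j-1}+1}$ that you isolate is not shown to belong to any non-located pair, so there is nothing to iterate on.

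The paper also organizes the proof differently and more economically. It does not split into strong/non-strong cases. Instead of normalising by ``$j<j'$'', it takes w.l.o.g.\ that $y$ does not dominate $x$; this choice is meaningful in the cyclic (strong) case as well. With that normalisation, a short cyclic-interval argument shows directly that no $s\in S\cap V(D_\ell)$ dominates $y$ (if it did, the round structure would force $s\not\to x$, so $s$ would separate the pair). Then the Case~3 argument above gives $\ell=1$ and $y\in T_1$ outright. Only \emph{after} this does the paper deduce strong connectivity: if $D$ were not strong, $y\in T_1$ would be undominated by $S$, whence by Lemma~\ref{claim:dominating} $x$ is dominated, so $N^-(x)\cap S\neq\emptyset=N^-(y)\cap S$, contradicting non-location. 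Your plan reverses this order and then leaves the strong case essentially unargued (``rerunning the block-by-block comparison in this cyclic setting'' is not a proof); adopting the paper's normalisation lets one argument cover both cases simultaneously.
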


\begin{proof}
Let $1\leq k\leq t$, $ 1\leq \ell \leq t$ such that $x\in V(D_k)$ and $y\in V(D_\ell)$. 
Let $1\leq i\leq r$, $ 1\leq j \leq r$  such that $x\in V(T_i)$ and $y\in V(T_j)$.
Recall that for all $1\leq m\leq t$, $S\cap V(D_m)$ is a locating set of $D_m$, so
$k\neq \ell$ and $i\neq j$. Since a local tournament is simple, we can assume, w.l.o.g., that $y$ does not dominate $x$.

Suppose by contradiction that there exists $s\in S\cap V(D_{\ell})$ that dominates $y$.
Let $m$ be such that $s\in T_m$. 
Since $s,y$ are vertices of $D_\ell$, we have $1\leq m\leq j \leq r$.
Since $k\neq \ell$, we know that $T_m$, $T_j$ and $T_i$ appear in this order along the cyclic order $T_1, \ldots, T_r$ (with maybe $m=j$).
Since $y$ does not dominate $x$, the vertex $s$ does not dominate $x$ by definition of round.  
Thus $s$ separates $x$ and $y$, a contradiction. So $S\cap V(D_{\ell})$ does not dominate $y$ and by Lemma~\ref{claim:firstset},  
$y$ is the only vertex of $D_\ell$ that is not dominated by $S\cap V(D_\ell)$ and
$y\in T_{i_{\ell-1}+1}$. 

Suppose by contradiction that $\ell>1$. 
We use an argument that is similar to the proof of Lemma~\ref{claim:dominating}.
If $T_{i_{\ell-1}}$ contains a unique vertex $v$, then, at step $\ell-1$ of the construction of $S$ we are in Case 2 and $v$ is added to $S$.
If $T_{i_{\ell-1}}$ contains at least two vertices, then, by Lemma~\ref{claim:sizetwoset}, it contains a vertex of $S$.
In both cases, $S\cap V(T_{i_{\ell-1}})\neq \emptyset$. 
Let $s'\in S\cap V(T_{i_{\ell-1}})$.
By Lemma~\ref{lem:next}, we have $T_{i_{\ell-1}} \Rightarrow T_{i_{\ell-1}+1}$, so $s'$ dominates $y$.
Since $x$ and $y$ are not located by $S$, vertex $s'$ dominates $x$. 
Hence, $T_{i_{\ell-1}}$, $T_{i_{\ell-1}+1}$, $T_i$ appear in this order along the cyclic order $T_1, \ldots, T_r$ (with maybe $i_{\ell-1}=i$).
Since $x$ is dominated by $s'$ and not by $y$, we have $i_{\ell-1}=i$. So $k=\ell-1$. Thus both $x$ and $s'$ are in $V(T_{i_k})$ and $y$ is in $V(T_{i_k+1})$. 
At step $k$ of the construction of $S$, we are in Case~3, and so $S\cap V(D_k)$ is a dominating set of $D_k$.
By Lemma~\ref{claim:firstset}, the set $S$ contains a vertex of $V(T_{i_{k-1}+1})$.
By definition of $i_k$, this vertex dominates $x\in V(T_{i_k})$ but not $y\in V(T_{i_k+1})$, a contradiction. So $\ell=1$, $y\in T_{1}$ and $y$ is the only vertex of $D_1$ that is not dominated by $S\cap V(D_1)$.

Suppose by contradiction that $S$ is not strongly connected, then the round decomposition is canonical and there is no vertex of $D\setminus V(D_1)$ that can dominate $y$. So $y$ is not dominated by $S$ and by Lemma~\ref{claim:dominating}, vertex $x$ is dominated by $S$, a contradiction. So $S$ is strongly connected.
\end{proof}

We now need to slightly modify $S$ in a particular case.
If $S\cap V(D_1)$ is not a dominating set of $D_1$, then by Lemma~\ref{claim:firstset},
let $z \in T_1$ be the unique vertex of $D_1$ not dominated by $S\cap V(D_1)$. In this case, let $S^+=S\cup \{z\}$, otherwise, let $S^+=S$.
We are now able to prove the locating-dominating property of $S^+$.

\begin{lemma}\label{claim:LD}
The set $S^+$ is a locating-dominating set of $D$.
\end{lemma}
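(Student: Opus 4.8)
The plan is to verify the two properties of $S^+$ separately, leaning on the lemmas already established. For the dominating property: by Lemma~\ref{claim:dominating}, at most one vertex $x$ of $D$ fails to be dominated by $S$, and if such $x$ exists it lies in $T_1$. If $S\cap V(D_1)$ is already a dominating set of $D_1$, then since every vertex outside $D_1$ is dominated (by the argument in Lemma~\ref{claim:dominating}) and every vertex of $D_1$ is dominated, $S^+=S$ dominates $D$. Otherwise $S^+=S\cup\{z\}$ where $z\in T_1$ is the unique undominated vertex of $D_1$; I would check that adding $z$ does not create a new undominated vertex, and in fact the only vertex of $D$ that could be left undominated after removing $z$ from the "to be dominated" pool is $z$ itself — so $S^+$ dominates $D$. (One must be slightly careful that $z\notin S$, which holds since $z$ is not dominated by $S$ and hence in particular not in a locating-dominating set piece; but $z$ could a priori be a vertex of a locating-only set added to $S$ — I would confirm from the construction that the undominated vertex is never put in $S$.)

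For the locating property: suppose $\{x,y\}\subseteq V(D)\setminus S^+$ is a pair not located by $S^+$. Since $S^+\supseteq S$, the pair is not located by $S$ either, so Lemma~\ref{claim:locating} applies: $D$ is strongly connected, and (relabelling) $y\in T_1$ is the unique vertex of $D_1$ not dominated by $S\cap V(D_1)$. But in the strongly connected case, $S\cap V(D_1)$ not dominating $D_1$ is exactly the situation in which we set $S^+ = S\cup\{z\}$ with $z=y$; hence $y\in S^+$, contradicting $y\notin V(D)\setminus S^+$. Therefore no such pair exists and $S^+$ is locating. The combination gives that $S^+$ is locating-dominating.

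The main obstacle I anticipate is not the logical skeleton — which is essentially an assembly of Lemmas~\ref{claim:sizetwoset}--\ref{claim:locating} — but the bookkeeping around the vertex $z$: making sure that $z$ is well-defined (uniqueness comes from Lemma~\ref{claim:firstset}), that $z\notin S$ so that $S^+$ genuinely has one more vertex only when needed (this matters for the later cardinality count, though not for this lemma), and that in the non-strong case the definition $S^+=S$ is consistent with Lemma~\ref{claim:dominating} leaving a possibly-undominated vertex in $T_1$ — here I would note that this residual undominated vertex is harmless because a locating set is allowed to leave exactly one vertex undominated, and $D_1$'s contribution was chosen to be merely locating (not dominating) precisely when parity forces it. So the delicate point is reconciling "at most one undominated vertex globally" with the case analysis defining $S^+$, and checking that in every branch the final $S^+$ has the claimed property; everything else is a direct citation of the preceding lemmas.
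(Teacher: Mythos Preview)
Your core argument is correct and is exactly the paper's proof: Lemma~\ref{claim:dominating} handles domination (the only possibly undominated vertex of $D$ is $z\in S^+$) and Lemma~\ref{claim:locating} handles location (any pair in $V(D)\setminus S$ not located by $S$ must contain $z\in S^+$).

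Your anticipated obstacle about the non-strong case, however, rests on a misreading: the choice $S^+=S$ versus $S^+=S\cup\{z\}$ depends only on whether $S\cap V(D_1)$ dominates $D_1$, not on whether $D$ is strong, so whenever there is an undominated vertex in $T_1$ we \emph{do} set $S^+=S\cup\{z\}$ and that vertex is $z$; your proposed fallback (``a locating set is allowed to leave one vertex undominated'') would not rescue the \emph{dominating} half of the claim and is, fortunately, unnecessary. Likewise, ``adding $z$ does not create a new undominated vertex'' is automatic, since enlarging a set can only improve domination.
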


\begin{proof}
Suppose first that $S\cap V(D_1)$ is a dominating set of $D_1$. Then $S^+$ is a  dominating set of $D$ by Lemma~\ref{claim:dominating} and a locating set of $D$ by Lemma~\ref{claim:locating}.
Suppose now that $S\cap D_1$ is not a dominating set. Recall that $z$ is the only vertex of $D_1$ not dominated 
by $S\cap D_1$. By Lemma~\ref{claim:dominating}, there is a unique vertex that is not dominated by $S$ and this vertex is in $T_1$, so this vertex is $z$.
So $S^+$ is a  dominating set of $D$.
By Lemma~\ref{claim:locating}, vertex $z$ is in all the pairs of vertices of $V(D)\setminus S$ not located by $S$. Thus $S^+$ is locating.
\end{proof}

By Lemma~\ref{claim:LD}, we have defined a locating-dominating set of $D$. We now have to bound its size.
For that purpose we use the following theorem from~\cite{Aline}:

\begin{theorem}[\cite{Aline}]
\label{th:aline-tournament}
A tournament $D$ of order $n$ satisfies $\gamma^{LD}(D) \leq \lceil \frac{n}{2} \rceil$ and $\gamma^{L}(D) \leq \lfloor \frac{n}{2} \rfloor$.
\end{theorem}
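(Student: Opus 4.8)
The plan is to prove both inequalities \emph{simultaneously} by induction on $n$: every tournament $D$ on $n$ vertices has a locating set of size at most $\lfloor n/2\rfloor$ \emph{and} a locating-dominating set of size at most $\lceil n/2\rceil$. Carrying the two bounds together is essential, because the decompositions below interact with parity and one bound feeds the other. The cases $n\le 2$ are immediate: the empty set is locating when $n\le 1$, and for $n=2$ either endpoint of the unique arc is a locating-dominating set of size $1$. For the inductive step I distinguish cases according to the structure of the strong components of $D$, whose condensation is a transitive tournament $C_1\Rightarrow C_2\Rightarrow\cdots\Rightarrow C_k$.

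The cleanest case is when $D$ has a source $v$, i.e. $|C_1|=1$. Then $D-v$ is a tournament on $n-1$ vertices, so by induction it has a locating set $S'$ with $|S'|\le\lfloor(n-1)/2\rfloor$ and a locating-dominating set $S''$ with $|S''|\le\lceil(n-1)/2\rceil$. Since $v$ dominates every other vertex, $\{v\}\cup S'$ is locating-dominating in $D$: every vertex outside it is dominated by $v$, and $v$ (which lies in every trace) preserves the distinctness already provided by $S'$; its size is at most $1+\lfloor(n-1)/2\rfloor=\lceil n/2\rceil$, which for even $n$ also meets the locating bound $\lfloor n/2\rfloor$. For the locating bound when $n$ is odd, the set $S''$ \emph{by itself} is locating in $D$: the only vertex it can fail to dominate is the source $v$, whose trace is empty, and $|S''|\le\lceil(n-1)/2\rceil=\lfloor n/2\rfloor$.

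When $D$ has no source but is not strong, so $|C_1|\ge 2$, I split $D$ into $D[C_1]$ and $D[R]$ with $R=V(D)\setminus C_1$, recurse on both, and set $S=S_{C_1}\cup S_R$. The point is that every vertex of $C_1$ dominates every vertex of $R$, so a cross pair $x\in C_1$, $y\in R$ is separated as soon as $x$ has an out-neighbour in $S_{C_1}$ \emph{or} $y$ has an in-neighbour in $S_R$; choosing $S_R$ to be a locating-dominating set of $D[R]$ therefore kills all cross pairs, while pairs inside $C_1$ (resp. inside $R$) are handled by $S_{C_1}$ (resp. $S_R$), and the only possible undominated vertex lies in $C_1$. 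Summing $\lceil\cdot/2\rceil$'s gives the desired bound except when $|C_1|$ and $|R|$ are both odd, where the count overshoots by one and one must argue more carefully — e.g. replace $S_{C_1}$ by a locating set of $D[C_1]$ and re-examine which of the two parts genuinely forces the full ceiling. Finally, if $D$ is strong: for $n=3$ it is a directed triangle and the bounds are checked directly; for $n\ge 4$ one deletes a non-cut vertex to obtain a smaller strong tournament and argues as above.

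I expect the two real obstacles to be precisely: (i) pinning the constant to $\lceil n/2\rceil$ / $\lfloor n/2\rfloor$ rather than one larger, which is where the finer parity analysis in the decomposition step is needed (and why both bounds are carried through the induction at once); and (ii) the strong case, where there is no decomposition to lean on — single-vertex deletion is lossy in the count, so one really wants a two-vertex reduction preserving strong connectivity. An alternative worth trying for the strong case is a direct greedy construction along a Hamiltonian path $v_1\to\cdots\to v_n$ (or a median order) of $D$: place $v_i$ outside $S$ whenever its current trace is still distinct from those already chosen, swapping a conflicting earlier vertex into $S$ otherwise, and then bound the number of vertices \emph{forced} into $S$ by $\lfloor n/2\rfloor$; controlling that count seems to be the delicate part.
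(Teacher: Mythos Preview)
The paper does not prove this theorem at all: it is quoted verbatim from \cite{Aline} and used as a black box in the size estimates of Section~\ref{sec:round}. So there is no ``paper's own proof'' to compare against here; any assessment is of your argument on its own merits.

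On those merits, your proposal is a plan with an explicitly acknowledged hole rather than a proof. The source case is clean and correct. The non-strong, non-source case is morally fine but, as you say, overshoots by one when $|C_1|$ and $|R|$ are both odd; you gesture at a fix (downgrade $S_{C_1}$ to a locating set) without checking that cross pairs are still separated in that variant, and they need not be: if $S_{C_1}$ is merely locating, the unique undominated vertex $x$ of $C_1$ may have no out-neighbour in $S_{C_1}$, and if simultaneously some $y\in R\setminus S_R$ has $N^-(y)\cap S_R=\emptyset$ (which is allowed when $S_R$ is only locating), then $x$ and $y$ share the empty trace. So the parity repair still needs work.

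The real gap is the strong case. Deleting one vertex is, as you note, lossy by exactly one in the even case, and there is no reason a strong tournament on $n\ge 4$ vertices admits a vertex whose removal keeps it strong (already false for $n=3$, and not a general fact for larger $n$ either). Your Hamiltonian/median-order greedy suggestion is in the right spirit --- the proof in \cite{Aline} does proceed along a Hamiltonian path --- but nothing in your write-up bounds the number of forced swaps, and that bound \emph{is} the content of the theorem. Until the strong case is actually carried out, the induction does not close: the non-strong case recurses into strong subtournaments $C_1$, so every branch of your argument ultimately rests on the part you have left open.
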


We use Theorem~\ref{th:aline-tournament} to first bound the size of $S$ and then of $S^+$. 
At each step $k$ of the construction of $S$, since $D_k$ is a tournament, the theorem gives an upper bound on the size of the set that is added to $S$:

\begin{lemma}\label{lem:size}
For $1\leq k \leq t$, the set of vertices of $D_k$ that is added to $S$ at step $k$ of the construction of $S$ has size at most $\lceil |V(D_k)|/2 \rceil$ when it is assumed to be a locating-dominating set of $D_k$ and size at most $\lfloor |V(D_k)|/2 \rfloor$ when it is assumed to be a locating set of $D_k$.
\end{lemma}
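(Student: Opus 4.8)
The plan is to prove Lemma~\ref{lem:size} by a short case analysis following exactly the three cases used in the construction of $S$, applying Theorem~\ref{th:aline-tournament} to each tournament involved. The one structural fact I would record at the outset is that the theorem is always applicable: $D_k$ is a tournament by construction, and in Case~2 the digraph $D'_k=D_k\setminus\{v\}$ is an induced subdigraph of $D_k$ and hence a tournament as well.

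First I would dispatch Cases~1 and~3, which are immediate. In Case~3 the set added to $S$ is a minimum locating-dominating set of $D_k$, so by Theorem~\ref{th:aline-tournament} its size is at most $\lceil |V(D_k)|/2\rceil$; this is the ``locating-dominating'' regime of the lemma. In Case~1, if $|V(D_k)|$ is even the set added is a minimum locating-dominating set of $D_k$, of size at most $\lceil |V(D_k)|/2\rceil$ (again the locating-dominating regime), and if $|V(D_k)|$ is odd it is a minimum locating set of $D_k$, of size at most $\lfloor |V(D_k)|/2\rfloor$ (the locating regime). So in these cases the two bounds of the lemma follow directly.

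Then I would handle Case~2, the only case needing any arithmetic. The set added there is $\{v\}$ together with a minimum locating-dominating set of $D'_k$ when $|V(D'_k)|$ is even, or $\{v\}$ together with a minimum locating set of $D'_k$ when $|V(D'_k)|$ is odd. In both subcases this set is a locating-dominating set of $D_k$: it dominates $D_k$ because $v\in S$ and the subset of $D'_k$ that is added dominates all of $D'_k$ (when $|V(D'_k)|$ is even this is built in, and when $|V(D'_k)|$ is odd it is the parenthetical observation made in the construction), and it is a locating set of $D_k$ by the remark that the set added in each of the three cases is locating. Hence the relevant target bound is $\lceil |V(D_k)|/2\rceil$. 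Writing $|V(D_k)|=|V(D'_k)|+1$ and using Theorem~\ref{th:aline-tournament}: if $|V(D'_k)|$ is even then $|V(D_k)|$ is odd and the size is at most $1+\lceil |V(D'_k)|/2\rceil=1+|V(D'_k)|/2=\lceil |V(D_k)|/2\rceil$; if $|V(D'_k)|$ is odd then $|V(D_k)|$ is even and the size is at most $1+\lfloor |V(D'_k)|/2\rfloor=1+(|V(D'_k)|-1)/2=\lceil |V(D_k)|/2\rceil$. This closes the case.

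I do not anticipate a genuine obstacle: this is a bookkeeping lemma whose proof is essentially an invocation of Theorem~\ref{th:aline-tournament} in each case. The two mildly delicate points are to keep the parities of $|V(D_k)|$ and $|V(D'_k)|$ straight (they differ in Case~2, which is precisely what makes the extra vertex $v$ ``free'' for the ceiling bound), and to classify correctly, in each case, whether the set added is being counted as a locating-dominating set or merely as a locating set, so that the appropriate half of Theorem~\ref{th:aline-tournament} is applied.
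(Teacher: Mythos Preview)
Your argument is correct and follows the same case analysis as the paper, invoking Theorem~\ref{th:aline-tournament} in each case with the same arithmetic. One small discrepancy worth flagging: in Case~2 with $|V(D'_k)|$ odd, the paper's proof classifies the added set as a \emph{locating} set of $D_k$ (bound $\lfloor |V(D_k)|/2\rfloor$), not a locating-dominating one. The parenthetical remark you invoke from the construction appears to be a slip in the paper: the vertex $v\in T_{i_k}$ has no out-neighbours in $D'_k$ (every $T_j$ with $i_{k-1}+1\le j<i_k$ satisfies $T_j\Rightarrow T_{i_k}$), so adjoining $v$ to a mere locating set of $D'_k$ need not dominate $D_k$. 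This does not affect your bound, since $|V(D_k)|$ is even in that subcase and hence $\lceil|V(D_k)|/2\rceil=\lfloor|V(D_k)|/2\rfloor$; but it does matter for consistency with how Lemma~\ref{claim:induction} later tracks which steps are ``assumed to dominate''.
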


\begin{proof}
Let $1\leq k \leq t$.
The lemma is clear by Theorem~\ref{th:aline-tournament} if we are in Cases~1 or~3 at step $k$ of the construction of $S$.
Consider now that we are in Case~2. If $|V(D'_k)|$ is even (i.e. $|V(D_k)|$ is odd), then $S\cap V(D_k)$ is composed of a minimum locating-dominating set of $D'_k$ plus the unique vertex of $T_{i_k}$. Thus, by Theorem~\ref{th:aline-tournament}, $S\cap V(D_k)$ is a locating-dominating set of $D_k$ of size at most $\lceil (|V(D_k)|-1)/2\rceil+1=\lceil |V(D_k)| /2 \rceil$. If $|V(D'_k)|$ is odd (i.e. $|V(D_k)|$ is even), then $S\cap V(D_k)$ is composed of a minimum locating set of $D'_k$ plus the unique vertex of $T_{i_k}$.
 Thus, similarly, $S\cap V(D_k)$ is a locating set of $D_k$ of size at most 
$\lfloor (|V(D_k)|-1)/2\rfloor+1=\lfloor |V(D_k)| /2 \rfloor$
\end{proof}

For every $1\leq k \leq t$, let $n_k=|V(D_k)|+ \cdots + |V(D_t)|$.
In the following lemma, we bound the size of a minimum locating-dominating set of a subgraph of $D$.

\begin{lemma}\label{claim:induction}
For $1\leq k \leq t$, the size of $S\cap (V(D_k)\cup \cdots \cup V(D_t))$ is at most $\lfloor \frac{n_k}{2} \rfloor $.
\end{lemma}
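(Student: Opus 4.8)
The statement is an induction on $k$ running downward from $k=t$ to $k=1$, which is exactly the order in which the set $S$ was constructed. The plan is to set up the induction with base case $k=t$, where $n_t=|V(D_t)|$ and the bound follows directly from Lemma~\ref{lem:size}: in Case~1 (which always applies at $k=t$) we add a locating-dominating set when $|V(D_t)|$ is even, of size at most $\lceil |V(D_t)|/2\rceil = \lfloor n_t/2\rfloor$, and a locating set when $|V(D_t)|$ is odd, of size at most $\lfloor |V(D_t)|/2\rfloor = \lfloor n_t/2\rfloor$. So in both parities the bound $\lfloor n_t/2\rfloor$ holds, and moreover we should record as part of the induction hypothesis the stronger bookkeeping fact that when $S\cap V(D_k)$ fails to be dominating we have used only a locating set (saving one vertex), which is what lets the induction close.

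\smallskip

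For the inductive step, I would fix $k<t$ and split on which of the three cases applies at step $k$. In \emph{Case~1}, the set $S\cap V(D_{k+1})$ is dominating, so by induction $|S\cap(V(D_{k+1})\cup\cdots\cup V(D_t))|\le\lfloor n_{k+1}/2\rfloor$, and Lemma~\ref{lem:size} adds at most $\lceil|V(D_k)|/2\rceil$ or $\lfloor|V(D_k)|/2\rfloor$ according to the parity of $|V(D_k)|$. The elementary fact $\lfloor a/2\rfloor+\lceil b/2\rceil\le\lfloor(a+b)/2\rfloor$ when $b$ is even, and $\lfloor a/2\rfloor+\lfloor b/2\rfloor\le\lfloor(a+b)/2\rfloor$ always, gives $n_k/2$ rounded down in every subcase. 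In \emph{Cases~2 and~3}, $S\cap V(D_{k+1})$ is \emph{not} assumed dominating, so the induction hypothesis gives the weaker-looking bound $\lfloor n_{k+1}/2\rfloor$ but with the extra information that one of $V(D_{k+1}),\dots,V(D_t)$ has contributed a mere locating set; however what we actually need from induction here is just $\lfloor n_{k+1}/2\rfloor$ together with Lemma~\ref{lem:size}'s guarantee that in Case~2 (resp. Case~3) the piece added at step $k$ has size at most $\lceil|V(D_k)|/2\rceil$ and, crucially, \emph{is} a dominating set of $D_k$ whenever $|V(D_k')|$ is odd (Case~2) or always (Case~3). Combining as before yields $\lfloor n_k/2\rfloor$.

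\smallskip

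The one delicate point — and the step I expect to be the main obstacle — is the parity bookkeeping in Case~2 when $|V(D_k')|$ is even, i.e. $|V(D_k)|$ is odd. There we add a minimum locating set of $D_k'$ plus one vertex, total at most $\lfloor(|V(D_k)|-1)/2\rfloor+1=\lceil|V(D_k)|/2\rceil=(|V(D_k)|+1)/2$, which is \emph{larger} than $\lfloor|V(D_k)|/2\rfloor$; so naively $\lfloor n_{k+1}/2\rfloor+(|V(D_k)|+1)/2$ could exceed $\lfloor n_k/2\rfloor$ by up to one when $n_{k+1}$ is odd. The resolution is that in this subcase $S\cap V(D_k)$ is dominating, so the \emph{previous} step (step $k+1$, or rather the whole tail) could only have reached the state "$S\cap V(D_{k+1})$ not dominating" through a chain ending in a step that used a locating rather than locating-dominating set — one must check that the induction hypothesis, stated with the refinement "if $S\cap V(D_k)$ is not dominating then the true size is at most $\lfloor n_k/2\rfloor-$(something), or equivalently one of the tail pieces is locating-only", absorbs the $+1$. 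Concretely I would strengthen the inductive claim to: the size is at most $\lfloor n_k/2\rfloor$, and if furthermore $S\cap V(D_k)$ is a dominating set of $D_k$, then the size is at most $\lceil n_k/2\rceil - 1 = \lfloor n_k/2\rfloor$ when $n_k$ is even but we gain that an earlier locating set compensates. In practice the clean way is to carry the invariant "$|S\cap(V(D_k)\cup\cdots\cup V(D_t))| + [\,S\cap V(D_k)\text{ not dominating}\,] \le \lceil n_k/2\rceil$" and verify it propagates through all three cases; Lemma~\ref{lem:size} was evidently designed so that this works, the only real content being the arithmetic of floors and ceilings across the case split. Once the invariant is pinned down correctly, each case is a two-line computation.
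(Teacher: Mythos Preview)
Your strategy --- downward induction with a strengthened hypothesis to absorb the $+1$ that appears when $|V(D_k)|$ is odd and the step-$k$ piece is dominating --- is a valid and genuinely different route from the paper's. The paper does \emph{not} strengthen the induction; instead, when the problematic case occurs at step $k$, it looks forward to the first $\ell>k$ at which Case~1 applies with $|V(D_\ell)|$ odd, checks that all intermediate $|V(D_j)|$ are even with locating-only pieces, bounds the whole block $|S\cap(V(D_k)\cup\cdots\cup V(D_\ell))|$ directly by $(|V(D_k)|+1)/2 + \sum_{k<j<\ell}|V(D_j)|/2 + (|V(D_\ell)|-1)/2$ (the $+1$ at $k$ cancels the $-1$ at $\ell$), and then invokes the plain hypothesis at $\ell+1$.

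Your approach would work, but the specific invariant you write down does not close. The statement $\sigma_k+[\,S\cap V(D_k)\text{ not dominating}\,]\le\lceil n_k/2\rceil$ (with $\sigma_k=|S\cap(V(D_k)\cup\cdots\cup V(D_t))|$) does propagate through the cases, yet in the dominating case it yields only $\sigma_k\le\lceil n_k/2\rceil$, not the required $\lfloor n_k/2\rfloor$; so it does not imply the lemma. You also have the direction of the refinement reversed in your prose: the saving occurs when the piece is \emph{not} dominating, not when it is. The correct strengthening is the pair
\[
\sigma_k\le\lfloor n_k/2\rfloor,\qquad\text{and if $S\cap V(D_k)$ is not assumed dominating then }\sigma_k\le\lceil n_k/2\rceil-1.
\]
The second clause is exactly what rescues Cases~2 and~3: there $S\cap V(D_{k+1})$ is not dominating, so induction gives $\sigma_{k+1}\le\lceil n_{k+1}/2\rceil-1$, and then $\sigma_k\le\lceil |V(D_k)|/2\rceil+\lceil n_{k+1}/2\rceil-1\le\lfloor n_k/2\rfloor$ via the identity $\lceil a/2\rceil+\lceil b/2\rceil-1\le\lfloor(a+b)/2\rfloor$. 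With this fix your three-case split becomes the two-line computation you anticipated. (Minor slip: in the delicate subcase of Case~2 you wrote ``locating set of $D_k'$'' where the construction actually takes a locating-dominating set; the size bound is unaffected.)
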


\begin{proof}
We prove the lemma by induction on $k$ from $t$ down to $1$.

At step $t$ of the construction of $S$, we are in Case~1. 
If $|V(D_t)|$ is even, then  $S\cap V(D_t)$ is a minimum locating-dominating set of $D_t$ of size at most $\lceil |V(D_t)|/2\rceil=\lfloor n_t/2\rfloor$ by Theorem~\ref{th:aline-tournament}. If $|V(D_t)|$ is odd, then $S\cap V(D_t)$ is a minimum locating set of $D_t$ of size at most 
$\lfloor n_t/2\rfloor$  by Theorem~\ref{th:aline-tournament}.

Let us now fix $1\leq k < t$. We assume that the lemma holds for all $j$ with $k+1\leq j \leq t$ and we prove that it holds for $k$. 
If $V(D_{k})$ has even size or $S\cap V(D_{k})$ is not assumed to dominate $D_{k}$, then by Lemma~\ref{lem:size}, $|S\cap V(D_{k})|\leq \lfloor |V(D_{k})|/2 \rfloor $ and it follows that the lemma holds for $k$. 
So we can now suppose that $V(D_{k})$ has odd size and $S\cap V(D_{k})$ is assumed to dominate $D_{k}$. 
Thus, we are in Case~2 or~3 at step $k$ and $S\cap V(D_{k+1})$ is not assumed to dominate $D_{k+1}$. Then, at step ${k+1}$, we are either in Case~1 with $V(D_{k+1})$ has odd size, or in Case~2 with $V(D_{k+1})$ has even size, $k+1<t$ and 
$S\cap V(D_{k+2})$ is not assumed to dominate $D_{k+2}$. In the second case, again it means that at step $k+2$ we are in the same situation, i.e. we are either in Case~1 with $V(D_{k+2})$ has odd size, or in Case~2 with 
$V(D_{k+2})$ has even size, $k+2<t$ and 
$S\cap V(D_{k+3})$ is not assumed to dominate $D_{k+3}$. We can repeat this argument until we actually encounter a step $\ell$ of the construction of $S$ such that
we are in Case~1 at step $\ell$  with $V(D_{\ell})$ has odd size (we assume $\ell$ is the first such step from $k+1$ towards $t$) .
Note that for $k+1 \leq j \leq \ell$, $S\cap V(D_{j})$ is not assumed to dominate $D_{j}$.
To summarize, we have: 
\begin{itemize}
\item $1\leq k < \ell \leq  t$
\item  $V(D_{k})$ and $V(D_{\ell})$ have odd sizes 
\item $V(D_{j})$ has even size for $k+1 \leq j \leq \ell-1$, 
\item $S\cap V(D_{k})$ is assumed to dominate $D_{k}$
\item $S\cap V(D_{j})$ is not assumed to dominate $D_{j}$ for $k+1 \leq j \leq \ell$,
\end{itemize} 

Thus, by Theorem~\ref{th:aline-tournament}, we obtain the following:
$$ |S\cap (V(D_{{k}})\cup\cdots\cup V(D_{\ell}))|$$$$\leq (|V(D_{k})|+1)/2 +|V(D_{k+1})\cup \cdots \cup V(D_{\ell-1})|/2 +(|V(D_{\ell})|-1)/2$$ $$\leq\lfloor |V(D_{k})\cup\ldots \cup V(D_{\ell})|/2\rfloor$$ 

If $\ell=t$, then the lemma holds for $k$ and we are done. If $\ell < t$, then we use the induction hypothesis that the lemma holds for $\ell+1$ to conclude.
\end{proof} 

A corollary of Lemma~\ref{claim:induction} is that the size of $S$ is at most $\lfloor n/2 \rfloor$. We are now able to bound $\vert S^+\vert$.

\begin{lemma}\label{claim:sizeS'}
The size of $S^+$ is at most $\lceil n/2 \rceil $.
\end{lemma}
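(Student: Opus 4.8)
The plan is to leverage Lemma~\ref{claim:induction} with $k=1$, which already gives $|S|\le\lfloor n/2\rfloor$, and then to treat separately the only case in which $S^+$ is strictly larger than $S$. Concretely, if $S\cap V(D_1)$ is a dominating set of $D_1$ then $S^+=S$ and $|S^+|=|S|\le\lfloor n/2\rfloor\le\lceil n/2\rceil$, so nothing more is needed. Otherwise $S^+=S\cup\{z\}$, where $z$ is by definition the unique vertex of $D_1$ not dominated by $S\cap V(D_1)$; since domination concerns only vertices outside the set, this $z$ lies outside $S$, so $|S^+|=|S|+1$, and the point is to show that $|S|\le\lceil n/2\rceil-1$ in this situation. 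When $n$ is odd this is already covered, since then $|S|\le\lfloor n/2\rfloor=\lceil n/2\rceil-1$; so the work is entirely in the case $n$ even, where we must actually save one vertex relative to the generic bound.

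The first thing I would establish in that case is precisely where it can occur. Looking back at step $k=1$ of the construction of $S$: in Case~3, and in Case~1 with $|V(D_1)|$ even, we insert a locating-\emph{dominating} set of $D_1$; and in Case~2 the set inserted (including the vertex $v$) is a dominating set of $D_1$ — this is exactly the parenthetical remark in the construction when $|V(D'_1)|$ is odd, and it also holds when $|V(D'_1)|$ is even, since then every vertex of $D_1$ outside $S$ is a vertex of $D'_1$ dominated by the chosen locating-dominating set of $D'_1$. Hence the only remaining possibility is that step~$1$ falls in Case~1 with $|V(D_1)|$ odd, so that $S\cap V(D_1)$ is a minimum locating set of the tournament $D_1$, of size at most $\lfloor|V(D_1)|/2\rfloor=(|V(D_1)|-1)/2$ by Theorem~\ref{th:aline-tournament}.

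Finally I would add up the contributions. If $t=1$ then $n=|V(D_1)|$ is odd and $|S^+|=|S\cap V(D_1)|+1\le(n-1)/2+1=\lceil n/2\rceil$. If $t\ge 2$, set $n_2=n-|V(D_1)|$; Lemma~\ref{claim:induction} applied with $k=2$ gives $|S\setminus V(D_1)|\le\lfloor n_2/2\rfloor$, so $|S^+|\le(|V(D_1)|-1)/2+\lfloor n_2/2\rfloor+1$, and a short parity check closes it: if $n_2$ is even then $n$ is odd and the bound reads $(n+1)/2=\lceil n/2\rceil$, while if $n_2$ is odd then $n$ is even and it reads $n/2=\lceil n/2\rceil$. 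I expect the one place needing care to be the classification in the middle paragraph — that a non-dominating $S\cap V(D_1)$ forces both Case~1 and odd $|V(D_1)|$ — since it rests on tracking all three cases of the construction together with the parenthetical claim in Case~2; the rest is routine arithmetic on top of Lemma~\ref{claim:induction} and Theorem~\ref{th:aline-tournament}.
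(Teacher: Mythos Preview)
Your argument has a gap in the classification step. You claim that in Case~2 of the construction at step~$1$ the set $S\cap V(D_1)$ is always a dominating set of $D_1$, invoking the parenthetical remark for the sub-case $|V(D'_1)|$ odd. But that parenthetical contains a typo: it should read ``even'' rather than ``odd''. Indeed, when $|V(D'_1)|$ is odd (equivalently $|V(D_1)|$ even), what is added is a minimum \emph{locating} set $L$ of $D'_1$ together with the vertex $v\in T_{i_1}$. Since $|T_{i_1}|=1$ and every $T_j$ with $i_0+1\le j<i_1$ satisfies $T_j\Rightarrow T_{i_1}$ (by the round property applied to $T_{i_0+1}\Rightarrow T_{i_1}$, and simplicity), the vertex $v$ is a sink in the tournament $D_1$; hence $L\cup\{v\}$ dominates $D_1$ only if $L$ already dominates $D'_1$, which a minimum locating set need not do. The paper's own proof of the lemma confirms this reading: it explicitly lists ``Case~2 with $|V(D_1)|$ even'' as one of the two situations in which $S\cap V(D_1)$ may fail to dominate, and the proof of Lemma~\ref{claim:induction} uses the same dichotomy.

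This missed sub-case is exactly where your counting fails. With $n$ even and $|V(D_1)|$ even, Lemma~\ref{lem:size} gives only $|S\cap V(D_1)|\le |V(D_1)|/2$, and Lemma~\ref{claim:induction} for $k=2$ gives only $|S\setminus V(D_1)|\le n_2/2$ since $n_2$ is also even; together these yield $|S^+|\le n/2+1$, one too many. The paper closes the gap by using the extra information available in Case~2: $S\cap V(D_2)$ is not assumed to dominate $D_2$ either, and one can trace this chain of Case-2 steps forward to the first index $\ell\ge 2$ at which step~$\ell$ is in Case~1 with $|V(D_\ell)|$ odd. The locating set chosen there contributes only $(|V(D_\ell)|-1)/2$, and the resulting parity forces $n_{\ell+1}$ to be odd, so Lemma~\ref{claim:induction} applied from $\ell+1$ recovers the missing unit. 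Your argument for Case~1 with $|V(D_1)|$ odd is fine and matches the paper; it is only this second branch that you need to add.
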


\begin{proof}
By Lemma~\ref{claim:induction}, we have $|S| \leq \lfloor n/2 \rfloor$. Thus if $S= S^+$, then the lemma holds. Thus we can assume that 
$S\neq S^+$, i.e. $S\cap V(D_1)$ is not a dominating set of $D_1$ and $S^+=S\cup \{z\}$. If $n$ is odd, then 
 $|S^+|= |S|+1\leq \lfloor n/2 \rfloor +1 = \lceil n/2 \rceil $ and we are done. So we can assume that $n$ is even.

By Lemma~\ref{lem:size}, the size of $S\cap V(D_1)$ is at most $\lfloor |V(D_1)|/2\rfloor$ 

By construction of $S$, at step $1$, we are either in Case~1 with $|V(D_1)|$ is odd, or in Case~2 with $|V(D_1)|$ is even, $1<t$ and $S\cap V(D_2)$ is not assumed to dominate $D_2$.
We consider this two different cases below.

Consider first that we are in Case~1. Since $|V(D_1)|$ is odd and $n$ is even, we have $1<t$ and $n_2=n-|V(D_1)|$ is odd.
By Lemma~\ref{claim:induction},
the size of $S\cap (V(D_2)\cup \ldots \cup V(D_t))$ is at most $\lfloor n_2/2 \rfloor = (n_2-1)/2$.
Moreover $S\cap V(D_1)$ has size at most $\lfloor |V(D_1)|/2\rfloor=(|V(D_1)|-1)/2$. So 
in total $S$ has size at most $n/2-1$. Therefore $S^+$ has size at most $n/2$ and we are done.

Consider now that we are in Case~2 of the construction at step $1$. As in the proof of Claim \ref{claim:induction}, let $D_\ell$ be the first step from $2$ towards $t$
such that we are in Case~1 at step $\ell$ of the construction of $S$ with $V(D_{\ell})$ has odd size.
Then, we have:

\begin{itemize}
\item $2\leq \ell \leq  t$
\item $V(D_{j})$ has even size for $1 \leq j \leq \ell-1$, 
\item $V(D_{\ell})$ has odd size 
\item $S\cap V(D_{j})$ is not assumed to dominate $D_{j}$ for $1 \leq j \leq \ell$,
\end{itemize}

Thus, by Theorem~\ref{th:aline-tournament}, we obtain the following:
$$ |S\cap (V(D_{{1}})\cup\cdots\cup V(D_{\ell}))|\leq |V(D_{1})\cup \cdots \cup V(D_{\ell-1})|/2 +(|V(D_{\ell})|-1)/2$$

Since $n$ is even and $|V(D_{{1}})\cup\cdots\cup V(D_{\ell})|$ is odd, we have
$\ell<t$ and $n_{\ell+1}$ is odd. 
By Lemma~\ref{claim:induction}, 
the size of $S\cap (V(D_{\ell+1})\cup \cdots \cup V(D_t))$ is at most $\lfloor (n_{\ell+1})/2 \rfloor =(n_{\ell+1}-1)/2$.
So again $S$ has size at most $n/2-1$ and $S^+$ has size at most $n/2$.
\end{proof}

By combining the results of this section we obtain the following:

\begin{lemma}
\label{lem:roundboundS}
The set $S^+$ is a locating dominating set of $D$ of size at most $\lceil n/2 \rceil$.
Moreover, if $D$ is not strongly connected, then $S$ is a locating set of $D$ of size at most $\lfloor n/2 \rfloor$.
\end{lemma}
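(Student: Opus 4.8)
The plan is to simply collect the statements already established in this section. The set $S^+$ was defined explicitly (as $S$ itself, or $S\cup\{z\}$ where $z\in T_1$ is the unique vertex of $D_1$ not dominated by $S\cap V(D_1)$), and Lemma~\ref{claim:LD} proves it is a locating-dominating set of $D$, while Lemma~\ref{claim:sizeS'} gives $|S^+|\leq \lceil n/2\rceil$. So the first sentence of Lemma~\ref{lem:roundboundS} is immediate. The only thing requiring a short argument is the ``moreover'' part.

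\begin{proof}
The first assertion follows immediately from Lemma~\ref{claim:LD} and Lemma~\ref{claim:sizeS'}.

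Suppose now that $D$ is not strongly connected. We claim that $S$ itself is a locating set of $D$ of size at most $\lfloor n/2\rfloor$. For the size bound, apply Lemma~\ref{claim:induction} with $k=1$: since $n_1=|V(D_1)|+\cdots+|V(D_t)|=n$, we get $|S|=|S\cap(V(D_1)\cup\cdots\cup V(D_t))|\leq\lfloor n/2\rfloor$.

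It remains to check that $S$ is locating. Suppose not: then there is a pair $\{x,y\}$ of vertices of $V(D)\setminus S$ that is not located by $S$. By Lemma~\ref{claim:locating}, the existence of such a pair forces $D$ to be strongly connected, contradicting our assumption. Hence $S$ is a locating set of $D$.
\end{proof}

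The proof is short because all the work has been done in the preceding lemmas; the only subtlety is remembering that Lemma~\ref{claim:locating} yields strong connectivity as a conclusion, so in the non-strong case no unlocated pair can exist and we do not need the extra vertex $z$. I do not expect any real obstacle here; if anything, the one point to state carefully is the instantiation $n_1=n$ in Lemma~\ref{claim:induction}, which holds because the subgraphs $D_1,\ldots,D_t$ partition $V(D)$.
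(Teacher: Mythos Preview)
Your proof is correct and follows exactly the same approach as the paper: the first part is obtained from Lemmas~\ref{claim:LD} and~\ref{claim:sizeS'}, and the second part from Lemmas~\ref{claim:locating} and~\ref{claim:induction}. The paper's proof is simply terser, citing these four lemmas without spelling out the instantiation $n_1=n$ or the contrapositive use of Lemma~\ref{claim:locating}, but the content is identical.
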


\begin{proof}
The first part is a consequence of Lemmas~\ref{claim:LD} and~\ref{claim:sizeS'}. 
The second part is a consquence of Lemmas~\ref{claim:locating} and~\ref{claim:induction}. 
\end{proof}

As a corollary, this proves Theorem \ref{th:main} for connected roundable local tournaments:

\begin{lemma}
\label{lem:roundbound}
A connected roundable local tournament $D$ of order $n$ satisfies $\gamma^{LD}(D)\leq \lceil \frac{n}{2} \rceil$. Moreover if $D$ is not strongly connected, then $\gamma^{L}(D)\leq \lfloor \frac{n}{2} \rfloor$. 
\end{lemma}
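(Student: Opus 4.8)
\textbf{Proof plan for Lemma~\ref{lem:roundbound}.}
The statement is essentially a packaging of the work already done in this section, so the plan is to combine Lemma~\ref{lem:roundboundS} with the construction that precedes it. First I would recall that the whole of Section~\ref{sec:round} is set up under the standing hypothesis that $D$ is a roundable connected local tournament equipped with a round decomposition $R[T_1,\ldots,T_r]$, taken to be canonical when $D$ is not strong; this is exactly the class of digraphs addressed by the lemma. Given such a $D$ of order $n$, the sequence $(i_k)_{0\le k\le t}$, the tournaments $D_1,\ldots,D_t$, and the vertex sets $S$ and $S^+$ are all well defined by the explicit case analysis (Cases~1--3) carried out earlier.

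The key step is simply to invoke Lemma~\ref{lem:roundboundS}, which asserts both that $S^+$ is a locating-dominating set of $D$ of size at most $\lceil n/2\rceil$, and that when $D$ is not strongly connected, $S$ is a locating set of $D$ of size at most $\lfloor n/2\rfloor$. From the first assertion, $\gamma^{LD}(D)\le |S^+|\le \lceil n/2\rceil$, giving the main conclusion. From the second assertion, whenever $D$ is not strongly connected, $\gamma^{L}(D)\le |S|\le \lfloor n/2\rfloor$, giving the "moreover" part. That is the entire argument: the lemma is a corollary of Lemma~\ref{lem:roundboundS} obtained by passing from the concrete sets $S^+$ and $S$ to the parameters $\gamma^{LD}$ and $\gamma^{L}$ they bound.

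There is essentially no obstacle here, since the substantive content — that the construction yields a valid locating-dominating (resp.\ locating) set and that Theorem~\ref{th:aline-tournament} controls its size — has already been discharged in Lemmas~\ref{claim:LD}, \ref{claim:locating}, \ref{lem:size}, \ref{claim:induction}, \ref{claim:sizeS'} and collected in Lemma~\ref{lem:roundboundS}. The only point worth a sentence of care is making explicit that the hypothesis "connected roundable local tournament" of the lemma matches precisely the standing assumptions under which $S$ and $S^+$ were built, and that the canonical round decomposition exists in the non-strong case (as recalled in Section~\ref{sec:preli} from~\cite{structure}), so that Lemma~\ref{lem:roundboundS} genuinely applies to every $D$ in the stated class.
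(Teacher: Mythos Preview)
Your proposal is correct and matches the paper's approach exactly: the paper states this lemma as an immediate corollary of Lemma~\ref{lem:roundboundS}, without even a separate proof environment, and your plan is precisely to invoke Lemma~\ref{lem:roundboundS} and pass from $S^+$ and $S$ to $\gamma^{LD}$ and $\gamma^{L}$.
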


\section{Non-roundable local tournaments}\label{sec:notround}

In the previous section, we proved the upper bound $\gamma^{LD}(D) \leq \lceil n/2\rceil$ when $D$ is a connected roundable local tournament.
By Theorem~\ref{th:aline-tournament}, this result is also true for tournaments. In order to prove Theorem~\ref{th:main}, we can now restrict ourselves to connected local tournaments which are not tournaments and not roundable.

\bigskip
Consider a connected local tournament $D$ that is not a tournament and not roundable.
By  \cite[Corollary~3.2]{structure}, every connected local tournament that is not strong is roundable, so $D$ is strongly connected.
Then, by \cite[Corollary~3.2, Lemmas~3.4 and~3.5]{structure}, there exists a set of vertices $X$ of $V(D)$ such that:
\begin{itemize}
\item $D \setminus X$ is not strongly connected and $X$ is minimal for this property
\item $D \setminus X$ is a connected local tournament that is not a tournament
\item $D[X]$ is a tournament
\item Let $R[T_1,\ldots, T_r]$ be the canonical round decomposition of $D\setminus X$, then $r\geq 3$ and there are all the arcs from $V(T_r)$ to $X$, and from $X$ to $V(T_1)$.
\end{itemize}

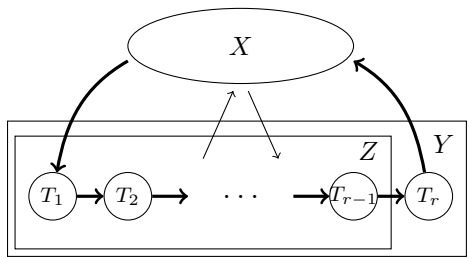
\begin{figure}[h]
\centering
\begin{tikzpicture}

\draw (2.5,2) ellipse (1.5cm and 0.5cm) node(x) {$X$};

\foreach \x in {0,1,4,5}
{\node[circle,minimum size=18,draw](\x) at (\x,0) {};}
\foreach \x in {2.5}
{\node[minimum size=20](\x)  at (\x,0) {\large $\ldots$};}

\node at (0,0) {\footnotesize $T_1$};
\node at (1,0) {\footnotesize $T_2$};
\node at (4,0) {\footnotesize $T_{r-1}$};
\node at (5,0) {\footnotesize $T_{r}$};

\path[->,draw,very thick] (0) to (1);
\path[->,draw,very thick] (1) to (1.8,0);
\path[->,draw,very thick] (3.2,0) to (4);
\path[->,draw,very thick] (4) to (5);
\path[->,draw,very thick] (1,1.8) to[out=-150,in=80] (0);
\path[->,draw,very thick] (5) to[out=100,in=-30] (4,1.8);
\path[->,draw] (2.6,1.4) to (3,0.5);
\path[->,draw] (2,0.5) to (2.4,1.4);

\draw (-0.5,-0.7) rectangle (4.5,0.8);
\node at (4.2,0.6) {$Z$};

\draw (-0.6,-0.8) rectangle (5.5,1);
\node at (5.2,0.7) {$Y$};

\end{tikzpicture}
\caption{Decomposition of non-roundable local tournaments}
\label{fig:nonroundable}
\end{figure}

Let $Y=V(D) \setminus X$ and $Z=Y\setminus V(T_r)$. See Figure \ref{fig:nonroundable} for an illustration of the decomposition of $D$.
Let $D_1=D[Y]$ and $D_2=D[Z]$. For $i\in\{1,2\}$, let $n_i$ be the order of $D_i$.
Note that since $r\geq 3$, the digraph $D_2$ is also a connected local tournament that is not strong with canonical round decomposition $R[T_1,\ldots, T_{r-1}]$.
So we can apply on $D_1$ and $D_2$ the method and results of Section~\ref{sec:round}.
For $i\in\{1,2\}$, let $S_i$ be the set that is defined on $D_i$ exactly as $S$ is defined on $D$ in Section~\ref{sec:round}.
By Lemma~\ref{lem:roundboundS}, the set  $S_i$ is a locating set of $D_i$ of size  $\lfloor n_i/2 \rfloor$. 
We will moreover need some particular properties obtained in Section~\ref{sec:round}, namely Lemmas~\ref{claim:sizetwoset} and~\ref{claim:dominating}.


Using these definitions of $S_1$ and $S_2$ we are now able to define a set of vertices $S$ of $D$ that will be our candidate for a locating-dominating set of $D$.
We consider the following four cases (see Figure \ref{fig:casesnonroundable} for an illustration):

\begin{itemize}
    \item \emph{Case 1: $\vert V(T_r)\vert = 1$ and $\vert X\vert = 1$}
    
    Let $S$ be the union of $V(T_r)$ and a minimum locating-dominating set of $D_2$
    
    \item \emph{Case 2: $\vert V(T_r)\vert = 1$ and $\vert X\vert > 1$}
    
    Let $S$ be the union of  $V(T_r)$, $S_2$, and a minimum locating set of $D[X]$.
    
    \item \emph{Case 3: $\vert V(T_r)\vert > 1$ and $\vert X\vert = 1$}
    
    Let $S$ be the union of $S_1$ and $X$.
    
        \item \emph{Case 4: $\vert V(T_r)\vert > 1$ and $\vert X\vert > 1$}
    
    Let $S$ be the union of $S_1$ and a minimum locating-dominating set of $D[X]$.
    \end{itemize}

\begin{figure}[h]
\centering
\scalebox{0.8}{\begin{tikzpicture}
\begin{scope}[shift={(-4,2.5)}]

\node[draw,shape=circle,minimum size=7,inner sep=0](x) at (2.5,2) {};
\node at (2.5,2.5) {$X$};

\foreach \x in {0,1,4}
{\node[circle,minimum size=18,draw](\x) at (\x,0) {};}
\foreach \x in {2.5}
{\node[minimum size=20](\x)  at (\x,0) {\large $\ldots$};}

\node at (0,0) {\footnotesize $T_1$};
\node at (1,0) {\footnotesize $T_2$};
\node at (4,0) {\footnotesize $T_{r-1}$};
\node[draw,shape=circle,fill=gray!70,minimum size=7,inner sep=0](5) at (5,0) {};
\node at (5,-0.5) {\footnotesize $T_{r}$};

\path[->,draw,very thick] (0) to (1);
\path[->,draw,very thick] (1) to (1.8,0);
\path[->,draw,very thick] (3.2,0) to (4);
\path[->,draw,very thick] (4) to (5);
\path[->,draw,very thick] (x) to[out=-170,in=60] (0);
\path[->,draw,thick] (5) to[out=110,in=-20] (x);
\path[->,draw] (x) to (2,0.5);
\path[->,draw] (3,0.5) to (x);

\draw (-0.5,-0.8) rectangle (4.5,0.8);
\node at (4.2,0.6) {$Z$};
\node at (2,-0.5) {LD-set};

\draw (-0.6,-0.9) rectangle (5.5,1);
\node at (5.2,0.7) {$Y$};

\node at (2.5,-1.5) {Case 1:  $|V(T_r)|=1$ and $|X|=1$};
\end{scope}

\begin{scope}[shift={(4,2.5)}]

\draw (2.5,2) ellipse (1.5cm and 0.5cm) node(x) {};
\node at (2.5,2.2) {$X$};
\node at (2.5,1.8) {L-set};

\foreach \x in {0,1,4}
{\node[circle,minimum size=18,draw](\x) at (\x,0) {};}
\foreach \x in {2.5}
{\node[minimum size=20](\x)  at (\x,0) {\large $\ldots$};}

\node at (0,0) {\footnotesize $T_1$};
\node at (1,0) {\footnotesize $T_2$};
\node at (4,0) {\footnotesize $T_{r-1}$};
\node[draw,shape=circle,fill=gray!70,minimum size=7,inner sep=0](5) at (5,0) {};
\node at (5,-0.5) {\footnotesize $T_{r}$};

\path[->,draw,very thick] (0) to (1);
\path[->,draw,very thick] (1) to (1.8,0);
\path[->,draw,very thick] (3.2,0) to (4);
\path[->,draw,very thick] (4) to (5);
\path[->,draw,very thick] (1,1.8) to[out=-150,in=80] (0);
\path[->,draw,very thick] (5) to[out=100,in=-30] (4,1.8);
\path[->,draw] (2.6,1.4) to (3,0.5);
\path[->,draw] (2,0.5) to (2.4,1.4);

\draw (-0.5,-0.8) rectangle (4.5,0.8);
\node at (4.2,0.6) {$Z$};
\node at (2,-0.5) {L-set $S2$};

\draw (-0.6,-0.9) rectangle (5.5,1);
\node at (5.2,0.7) {$Y$};

\node at (2.5,-1.5) {Case 2:  $|V(T_r)|=1$ and $|X|>1$};
\end{scope}

\begin{scope}[shift={(-4,-2.5)}]
\node[draw,shape=circle,minimum size=7,inner sep=0,fill=gray!70](x) at (2.5,2) {};
\node at (2.5,2.5) {$X$};

\foreach \x in {0,1,4,5}
{\node[circle,minimum size=18,draw](\x) at (\x,0) {};}
\foreach \x in {2.5}
{\node[minimum size=20](\x)  at (\x,0) {\large $\ldots$};}

\node at (0,0) {\footnotesize $T_1$};
\node at (1,0) {\footnotesize $T_2$};
\node at (4,0) {\footnotesize $T_{r-1}$};
\node at (5,0) {\footnotesize $T_{r}$};

\path[->,draw,very thick] (0) to (1);
\path[->,draw,very thick] (1) to (1.8,0);
\path[->,draw,very thick] (3.2,0) to (4);
\path[->,draw,very thick] (4) to (5);
\path[->,draw,very thick] (x) to[out=-170,in=60] (0);
\path[->,draw,very thick] (5) to[out=100,in=-10] (x);

\path[->,draw] (x) to (2,0.5);
\path[->,draw] (3,0.5) to (x);

\draw (-0.5,-0.5) rectangle (4.5,0.8);
\node at (4.2,0.6) {$Z$};

\draw (-0.6,-1) rectangle (5.5,1);
\node at (5.2,0.7) {$Y$};
\node at (2.5,-0.7) {L-set $S1$};
\node at (2.5,-1.5) {Case 3:  $|V(T_r)|>1$ and $|X|=1$};
\end{scope}

\begin{scope}[shift={(4,-2.5)}]

\draw (2.5,2) ellipse (1.5cm and 0.5cm) node(x) {};
\node at (2.5,2.2) {$X$};
\node at (2.5,1.8) {LD-set};

\foreach \x in {0,1,4,5}
{\node[circle,minimum size=18,draw](\x) at (\x,0) {};}
\foreach \x in {2.5}
{\node[minimum size=20](\x)  at (\x,0) {\large $\ldots$};}

\node at (0,0) {\footnotesize $T_1$};
\node at (1,0) {\footnotesize $T_2$};
\node at (4,0) {\footnotesize $T_{r-1}$};
\node at (5,0) {\footnotesize $T_{r}$};

\path[->,draw,very thick] (0) to (1);
\path[->,draw,very thick] (1) to (1.8,0);
\path[->,draw,very thick] (3.2,0) to (4);
\path[->,draw,very thick] (4) to (5);
\path[->,draw,very thick] (1,1.8) to[out=-150,in=80] (0);
\path[->,draw,very thick] (5) to[out=100,in=-30] (4,1.8);
\path[->,draw] (2.6,1.4) to (3,0.5);
\path[->,draw] (2,0.5) to (2.4,1.4);

\draw (-0.5,-0.5) rectangle (4.5,0.8);
\node at (4.2,0.6) {$Z$};

\draw (-0.6,-1) rectangle (5.5,1);
\node at (5.2,0.7) {$Y$};
\node at (2.5,-0.7) {L-set $S1$};
\node at (2.5,-1.5) {Case 4:  $|V(T_r)|>1$ and $|X|>1$};
\end{scope}

\end{tikzpicture}}
\caption{Construction of a locating-dominating sets in a non-roundable local tournament. Gray vertices are in the locating-dominating sets.}
\label{fig:casesnonroundable}
\end{figure}
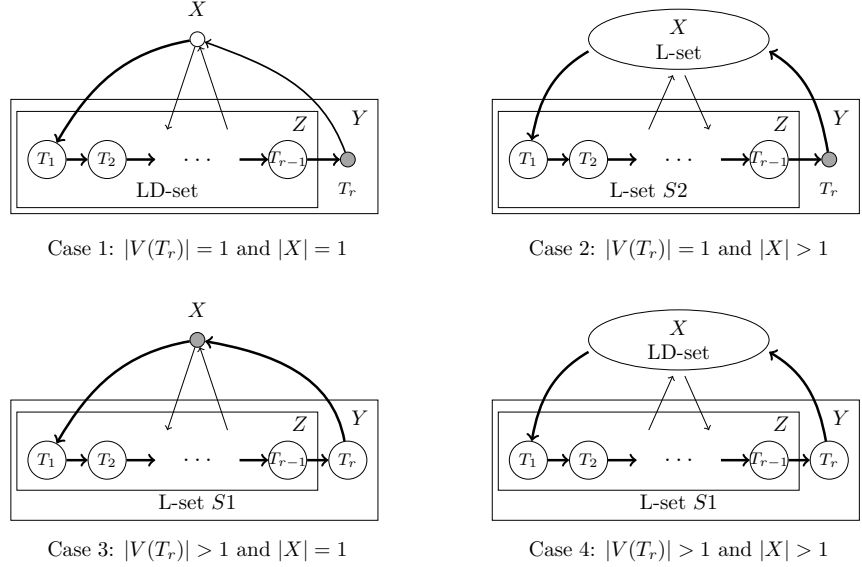


\begin{lemma}
\label{lem:ldnotround}
The set $S$ is a locating-dominating set of $D$.
\end{lemma}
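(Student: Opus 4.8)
The plan is to verify, in each of the four cases, that $S$ dominates every vertex of $D$ and that $S$ locates every pair of vertices of $V(D)\setminus S$. The structural backbone is the layering $X$, $V(T_r)$, $Z$: there are all arcs from $V(T_r)$ to $X$, all arcs from $X$ to $V(T_1)$, all arcs from $T_{r-1}$ to $T_r$ (Lemma~\ref{lem:next} applied inside $D_1$), and $D[X]$ is a tournament. I would first record two facts that will be used repeatedly. First, inside $Z$ (resp. $Y$), the set $S_2$ (resp. $S_1$) already locates all pairs of $V(D_2)\setminus S_2$ (resp. $V(D_1)\setminus S_1$) and, by Lemma~\ref{claim:dominating} applied to $D_2$ (resp. $D_1$), leaves at most one vertex undominated, which then lies in $T_1$. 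Second, a vertex of $X$ is dominated (within $D$) exactly by the in-neighbours it has inside $X$ together with all of $V(T_r)$; a vertex of $T_1$ is dominated by its in-neighbours inside $D_1$ together with all of $X$; a vertex of $Z\setminus V(T_1)$ is dominated only by vertices of $D_1$ (in fact of $D_2$); and a vertex of $V(T_r)$ is dominated by its in-neighbours inside $D_1$ only (since $X\Rightarrow T_1$ and $X$ receives no arc from $T_r$, no vertex of $X$ dominates $T_r$, and $T_r$ is the last layer of $D_1$). These observations let one treat each region of $D$ essentially independently.

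Then I would go case by case. In \emph{Case 1} ($|V(T_r)|=|X|=1$): $S$ contains the single vertex $v$ of $T_r$, which dominates the single vertex $x$ of $X$; $x\in S$? no — but $v\to x$ handles domination of $x$, and $x$ in turn dominates $T_1$, so the at-most-one undominated vertex of $D_2$ is rescued; the minimum locating-dominating set of $D_2$ handles all of $Z$; locating is immediate since any pair inside $Z$ is separated inside $D_2$, and $x$ (if outside $S$) is separated from any $z\in Z$ because $v$ dominates $x$ but not $z$ (as $v\in T_r$ is the top layer of $D_1$ and points nowhere into $Z$ except… — here one checks $v\to z$ fails because $Z$ lies strictly before $T_r$). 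In \emph{Case 2} ($|V(T_r)|=1$, $|X|>1$): $V(T_r)=\{v\}\subseteq S$ dominates all of $X$; the minimum locating set of $D[X]$ locates all but at most one vertex of $X$, and that vertex is separated from everything else by $v$ only if it is separated from vertices outside $X$ — which holds because $v$ dominates all of $X$ and nothing in $Z$; meanwhile $S_2$ handles $Z$ as in Section~\ref{sec:round}, and its possibly-undominated vertex in $T_1$ is dominated by $v$? no, $v\in T_r$ does not dominate $T_1$ — instead it is dominated by $X\Rightarrow T_1$, and since $|X|>1$ with a nonempty locating set, $S\cap X\neq\emptyset$ by Lemma~\ref{claim:sizetwoset} applied to $D[X]$, so some vertex of $X\cap S$ dominates it. In \emph{Cases 3 and 4} ($|V(T_r)|>1$): now $S_1$ is built on all of $D_1=D[Y]$, so by Lemma~\ref{lem:roundboundS} it locates $Y$ and leaves at most one undominated vertex of $Y$, lying in $T_1$; that vertex is dominated because $X\Rightarrow T_1$ and $X\cap S\neq\emptyset$ (in Case~3, $X\subseteq S$; in Case~4, $|X|>1$ and the minimum locating-dominating — hence nonempty — set of $D[X]$ is in $S$). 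Domination of $X$: in Case~3, $X\subseteq S$; in Case~4, the locating-dominating set of $D[X]$ dominates $X$ within $D[X]$. Locating pairs that straddle $X$ and $Y$: a vertex $x\in X\setminus S$ and a vertex $y\in Y\setminus S$ are separated because every vertex of $V(T_r)$ dominates $x$ but not $y$ unless $y\in V(T_r)$, in which case one uses that $T_{r-1}\Rightarrow T_r$ and $S\cap V(T_{r-1})\neq\emptyset$ (Lemma~\ref{claim:sizetwoset}, since either $|V(T_{r-1})|\geq 2$ or, if $=1$, Case~2 of the construction of $S_1$ put that vertex in $S_1$) to find a separator dominating $y$ but not $x$.

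The main obstacle will be the locating verification for \emph{mixed pairs}, i.e. pairs with one endpoint in $X$ and one in $Y$ (and, within Case~2, one in $X$ and one in $Z$, plus the undominated vertex of $T_1$). The subtlety is that a vertex of $V(T_r)$ dominates all of $X$ and all of $T_{r-1}$-side it is adjacent to, so one must be careful to exhibit a separator in $S$ — and that is exactly why the construction forces $V(T_r)\subseteq S$ whenever $|V(T_r)|=1$, and why $S\cap V(T_r)\neq\emptyset$ when $|V(T_r)|>1$ via Lemma~\ref{claim:sizetwoset}. I would organize the mixed-pair analysis by the layer of the $Y$-endpoint: if it is in $V(T_r)$, separate using $S\cap V(T_{r-1})$; if it is in $T_1$, separate using a vertex of $V(T_r)\cap S$ (which dominates $X$ but not $T_1$); if it is in $Z\setminus V(T_1)$, any vertex of $V(T_r)\cap S$ dominates the $X$-endpoint and nothing in $Z$, giving the separation. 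Everything else reduces to the already-established locating properties of $S_1$, $S_2$, and the locating(-dominating) sets of $D[X]$.
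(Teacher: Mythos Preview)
Your overall strategy matches the paper's, and Cases~1--3 as well as the sub-case $y\in Z$ of Case~4 are essentially right (modulo the harmless slip in Case~1 where you invoke ``$x$ dominates $T_1$'' even though $x\notin S$; fortunately the minimum LD-set of $D_2$ already dominates all of $Z$, so nothing needs rescuing).

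The genuine gap is in Case~4 when $x\in X\setminus S$ and $y\in V(T_r)\setminus S$. You try to separate this pair using a vertex of $S\cap V(T_{r-1})$, but both halves of that argument fail. First, $S_1\cap V(T_{r-1})\neq\emptyset$ is not guaranteed: Lemma~\ref{claim:sizetwoset} only gives this when $|V(T_{r-1})|\geq 2$, and your fallback ``Case~2 of the construction of $S_1$ put that vertex in $S_1$'' is invalid, since Case~2 of the Section~\ref{sec:round} construction concerns the last piece $T_{i_k}$ of some block $D_k$, and $T_{r-1}$ need not be such a piece. Second, and more seriously, even if you had a vertex $w\in S\cap V(T_{r-1})$, nothing forbids an arc $w\to x$: the structural hypotheses only force $T_r\Rightarrow X$ and $X\Rightarrow T_1$, while arcs between $X$ and the intermediate $T_i$'s can go either way (this is precisely what Figure~\ref{fig:nonroundable} depicts). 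So $w$ may dominate both $y$ and $x$ and separate nothing.

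The paper's fix is short and uses the one extra feature of Case~4 you did not exploit: $S$ contains a \emph{dominating} set of $D[X]$. Hence there is $x'\in S\cap X$ with $x'\to x$. Since $T_r\Rightarrow X$, we have $y\to x'$, and because $D$ is simple this forces $x'\not\to y$. Thus $x'$ dominates $x$ but not $y$, and separates the pair. Replacing your $T_{r-1}$ argument by this two-line observation closes the gap.
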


\begin{proof}
    \noindent \emph{(Case 1)}
Let $x$ be the only vertex of $X$ and $t$ be the only vertex of $T_r$.
The set $S$ contains a locating-dominating set of $D_2$ so all the vertices of $Y\setminus S$ are dominated by $S$. Moreover $x$ is dominated by $t$. So the set $S$ is a dominating set of $D$.

Every pair of vertices of $Y\setminus S$ is separated by $S$. Since the round decomposition is canonical, there is no arc from $t$ to $Z$, so $t$ separates $x$ from any vertex of $Y\setminus S$. Hence $S$ is a locating set of $D$.

    \medskip
   
    \noindent \emph{(Case 2)}
As previously, let $t$ be the only vertex of $T_r$.
By Lemma~\ref{claim:dominating} applied on $S_2$, there is at most one vertex of $Z$ that is not dominated by $S_2$, and if it exists, it is a vertex of $T_1$.
However, by assumption, $\vert X \vert > 1$, so a locating set of $D[X]$ contains at least one vertex of $X$. Thus $S$ contains at least one vertex $s$ of $X$.
Since there are all the arcs from $X$ to $V(T_1)$, all the vertices of $T_1$ are dominated by $s$.
All the vertices of $X$ are dominated by $t$. So $S$ is a dominating set of $D$.

Every pair of vertices of $X\setminus S$ and every pair of vertices of $Z\setminus S$ are separated by $S$.
There are all the arcs from $t$ to $X$ and no arc from $t$ to $Y\setminus \{t\}$. 
Hence, $t$ separates the vertices of $X$ from the vertices of $Z$, and $S$ is a locating set of $D$.

    \medskip
   
    \noindent \emph{(Case 3)}
Let $x$ be the only vertex of $X$.
By Lemma~\ref{claim:dominating} applied on $S_1$, there is at most one vertex of $Y$ that is not dominated by $S_1$, and if it exists, it is a vertex of $T_1$.
Since $x\in S$ and there are all the arcs between $x$ and $T_1$, then every vertex of $T_1$ is dominated by $x$. Hence $S$ is a dominating set of $D$.

By definition of $S$, all the pairs of vertices of $Y\setminus S$ are separated by $S$ and $x\in S$.
So $S$ is a locating set of $D$.
    
    \medskip
   
    \noindent \emph{(Case 4)}
Note that, since $S$ contains a locating set of $X$ and  $\vert X \vert > 1$, the set $S$ contains at least one vertex of $X$. Then the proof that $S$ is a dominating set of $D$ is exactly the same as in Case~3.

By definition of $S$, every pair of vertices of $X\setminus S$ and every pair of vertices of $Y\setminus S$ are separated by $S$.
Now consider $x\in X\setminus S$ and $y \in Y\setminus S$ and let us show that $x$ and $y$ are separated by $S$. We consider the following two cases:
\begin{itemize}
    \item \emph{Case A : $y\in T_r$}
    
    Since $S$ contains a dominating set of $D[X]$, there is a vertex $x'$ of $X\cap S$ that dominates $x$ in $D$.
    Since there are all the arcs from $V(T_s)$ to $X$, there is an arc from $y$ to $x'$. Since we are considering a simple digraph, there is no arc from $x'$ to $y$, so $x'$ separates $x$ and $y$.
    
    \item \emph{Case B : $y\notin T_r$}
    
    By Lemma~\ref{claim:sizetwoset} applied on $S_1$, there exists a vertex $y'$ in $S\cap T_r$. 
        Since the round decomposition is canonical, there is no arc from $y'$ to $y$.
                Plus, there are all the arcs from $V(T_r)$ to $X$, and there is an arc from $y'$ to $x$.
        Hence $y'$ separates $x$ and $y$.
        \end{itemize}
Therefore $S$ is a locating-dominating set of $D$.
\end{proof}

\begin{lemma}
\label{lem:sizenotround}
The size of $S$ is at most $\lceil \frac{n}{2} \rceil$.
\end{lemma}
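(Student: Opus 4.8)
The plan is to bound $|S|$ case by case, matching the four cases of the construction of $S$, and in each case using the size estimates already established for locating, locating-dominating, and the special Case~2 sets of roundable digraphs (Theorem~\ref{th:aline-tournament} and Lemma~\ref{lem:roundboundS}), together with the observation that $D_2$ is obtained from $D_1$ by deleting $V(T_r)$, so that $n_1 = n_2 + |V(T_r)|$ and $n = n_1 + |X|$. The easy cases are Cases~3 and~4, where $|V(T_r)|>1$: there $S$ is the union of $S_1$ (a locating set of $D_1$, of size $\lfloor n_1/2\rfloor$ by Lemma~\ref{lem:roundboundS}) and either the single vertex of $X$ or a minimum locating-dominating set of $D[X]$. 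Using $\gamma^{LD}(D[X])\leq\lceil|X|/2\rceil$, we get $|S|\leq\lfloor n_1/2\rfloor+\lceil|X|/2\rceil$, and since $D_1$ is not strong (its canonical round decomposition has $r\geq 3$ parts and $T_r$ is a sink component) this is at most $\lceil n/2\rceil$; the rounding works out because whenever $|X|$ is odd one can alternatively use a minimum \emph{locating} set of $D[X]$ of size $\lfloor|X|/2\rfloor$ after checking that the resulting $S$ is still dominating — I would phrase this as: if $n_1$ is odd the $\lfloor n_1/2\rfloor$ already gives slack, and if $n_1$ is even then $D[X]$ alone need not dominate $X$, but in Case~4 we took a locating-\emph{dominating} set of $D[X]$ precisely to absorb this, and in Case~3 $|X|=1$ so $\lceil|X|/2\rceil=1=\lfloor|X|/2\rfloor+1$ is harmless. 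So Cases~3 and~4 give $|S|\leq\lfloor n_1/2\rfloor+\lceil|X|/2\rceil\leq\lceil n/2\rceil$.

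For Cases~1 and~2, where $|V(T_r)|=1$, I would instead work with $D_2$: here $n_1 = n_2+1$ and $n = n_2 + 1 + |X|$. In Case~1, $|X|=1$, so $n = n_2+2$ and $S$ is the single vertex of $T_r$ together with a minimum locating-dominating set of $D_2$; by Lemma~\ref{lem:roundbound} applied to $D_2$ (which is a connected roundable local tournament, in fact not strong), $\gamma^{LD}(D_2)\leq\lceil n_2/2\rceil$, hence $|S|\leq 1+\lceil n_2/2\rceil = 1+\lceil (n-2)/2\rceil = \lceil n/2\rceil$. In Case~2, $S$ is the vertex of $T_r$, the set $S_2$, and a minimum locating set of $D[X]$; here $|S_2| = \lfloor n_2/2\rfloor$ by Lemma~\ref{lem:roundboundS} and $\gamma^{L}(D[X])\leq\lfloor|X|/2\rfloor$ by Theorem~\ref{th:aline-tournament}, so $|S|\leq 1 + \lfloor n_2/2\rfloor + \lfloor|X|/2\rfloor$. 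Since $|X|\geq 2$ and $n = n_2 + 1 + |X|$, one checks $1 + \lfloor n_2/2\rfloor + \lfloor|X|/2\rfloor \leq \lceil (n_2+1+|X|)/2\rceil = \lceil n/2\rceil$, the worst case being $n_2$ and $|X|$ both odd, where the left side is $1+(n_2-1)/2+(|X|-1)/2 = (n_2+|X|)/2$ and the right side is $\lceil(n_2+|X|+1)/2\rceil = (n_2+|X|+2)/2$ when $n_2+|X|$ is even — so there is even a unit of slack.

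The main obstacle I anticipate is not any single calculation but the bookkeeping of \emph{parities}: in the roundable section the bound $\lceil n/2\rceil$ versus $\lfloor n/2\rfloor$ hinges delicately on whether each $|V(D_k)|$ is even or odd and whether the relevant component is assumed to dominate, and here the same care is needed when deciding, in Case~3 and Case~4, whether $|S_1|=\lfloor n_1/2\rfloor$ combined with $\lceil|X|/2\rceil$ overshoots $\lceil n/2\rceil$ when both $n_1$ and $|X|$ are odd. The clean way around this is to note that $D_1$ is never strongly connected in our situation, so Lemma~\ref{lem:roundboundS} gives the stronger bound $|S_1|\leq\lfloor n_1/2\rfloor$ rather than $\lceil n_1/2\rceil$, which is exactly the floor needed to kill the bad parity; I would state this explicitly at the start of the proof and then the four case checks become short arithmetic verifications. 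Thus in every case $|S|\leq\lceil n/2\rceil$, completing the proof.
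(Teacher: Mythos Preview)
Your proposal is correct and follows essentially the same four-case arithmetic as the paper's proof, invoking Lemma~\ref{lem:roundboundS} for $|S_1|\le\lfloor n_1/2\rfloor$ and $|S_2|\le\lfloor n_2/2\rfloor$, Lemma~\ref{lem:roundbound} for $\gamma^{LD}(D_2)\le\lceil n_2/2\rceil$, and Theorem~\ref{th:aline-tournament} for the tournament $D[X]$. Your digression in Cases~3--4 about ``alternatively using a locating set of $D[X]$'' is unnecessary: the inequality $\lfloor n_1/2\rfloor+\lceil |X|/2\rceil\le\lceil(n_1+|X|)/2\rceil$ holds for all parities, so the direct computation the paper gives (and that you also state at the end of that paragraph) already suffices.
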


\begin{proof}

    \noindent \emph{(Case 1)}
        By Lemma~\ref{lem:roundbound}, a minimum  locating-dominating set of $D_2$ has size at most
     $\lceil \frac{n_2}{2}\rceil$. So $S$ has size at most $\lceil \frac{n_2}{2}\rceil+1 = \lceil \frac{n-2}{2} \rceil + 1 = \lceil \frac{n}{2} \rceil$.
    
    \medskip
    
    \noindent \emph{(Case 2)}
   By Lemma~\ref{lem:roundboundS}, the size of $S_2$ is at most $\lfloor \frac{n_2}{2} \rfloor = \lfloor \frac{n-1-\vert X\vert}{2} \rfloor$.
   By Theorem~\ref{th:aline-tournament}, a minimum locating set of $D[X]$ has size at most $\lfloor \frac{\vert X\vert}{2}\rfloor$.
    So $S$ has size at most  $ \lfloor \frac{n-1-\vert X\vert}{2} \rfloor + \lfloor \frac{\vert X\vert}{2}\rfloor +1 \leq \lceil \frac{n}{2} \rceil$ (check all the parity cases for the last inequality).

    \medskip
       
    \noindent \emph{(Case 3)}
    By Lemma~\ref{lem:roundboundS}, the size of $S_1$ is at most $\lfloor \frac{n_1}{2} \rfloor$.
So $S$ has size at most $\lfloor \frac{n-1}{2} \rfloor + 1 \leq \lceil \frac{n}{2} \rceil$.
    
    \medskip
       
    \noindent \emph{(Case 4)}
    By Lemma~\ref{lem:roundboundS}, the size of $S_1$ is at most $\lfloor \frac{n_1}{2} \rfloor$.
    By Theorem~\ref{th:aline-tournament},  a minimum locating-dominating set of $D[X]$ has size at most $\lceil \frac{\vert X\vert}{2}\rceil$.
    So $S$ has size at most $\lfloor \frac{n - \vert X\vert}{2}\rfloor + \lceil \frac{\vert X\vert}{2} \rceil \leq \lceil \frac{n}{2}\rceil$ (check all the parity cases for the last inequality).
\end{proof}

By combining the results of this section we obtain the following:

\begin{lemma}
\label{lem:notround}
A  connected local tournament  $D$ of order $n$ that is not roundable satisfies $\gamma^{LD}(D)\leq \lceil \frac{n}{2} \rceil$.
\end{lemma}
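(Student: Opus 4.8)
The plan is to reduce the statement to the three ingredients already assembled in this section, after first peeling off one easy base case. If $D$ is itself a tournament, then the bound $\gamma^{LD}(D)\le\lceil n/2\rceil$ is nothing but Theorem~\ref{th:aline-tournament}, so I may assume from now on that $D$ is a connected local tournament that is neither a tournament nor roundable. This is exactly the running hypothesis fixed at the start of Section~\ref{sec:notround}: by \cite{structure}, $D$ is then strongly connected and carries the structural decomposition into a vertex set $X$ with $D[X]$ a tournament, $D\setminus X$ a connected non-strong local tournament that is not a tournament, canonical round decomposition $R[T_1,\dots,T_r]$ with $r\ge 3$, all arcs from $V(T_r)$ to $X$ and all arcs from $X$ to $V(T_1)$.

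Next I would recall, rather than re-prove, the candidate set $S$. It is built from the sets $S_1$ and $S_2$ produced by the roundable-case procedure of Section~\ref{sec:round} applied to $D_1=D[Y]$ and $D_2=D[Z]$ — this is legitimate because $r\ge 3$ guarantees that $D_2$ is again a connected non-strong local tournament with canonical round decomposition $R[T_1,\dots,T_{r-1}]$, so Lemmas~\ref{claim:sizetwoset}, \ref{claim:dominating} and \ref{lem:roundboundS} apply to both $D_1$ and $D_2$ — together with either all of $V(T_r)$ or (part of) $X$, according to the four cases distinguished by whether $|V(T_r)|$ and $|X|$ are equal to or larger than $1$. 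With $S$ defined, Lemma~\ref{lem:ldnotround} certifies that $S$ is a locating-dominating set of $D$, and Lemma~\ref{lem:sizenotround} certifies that $|S|\le\lceil n/2\rceil$. Chaining these, $\gamma^{LD}(D)\le|S|\le\lceil n/2\rceil$, which is the claim.

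So the present lemma is really just the assembly of Lemmas~\ref{lem:ldnotround} and~\ref{lem:sizenotround} with the tournament base case, and I expect no genuine obstacle at this level: the substantive work — that in each of the four cases the vertices added on the $X$/$T_r$ side simultaneously repair domination of $T_1$ (or of $X$) and separate the two sides of any cross pair, while the counts $\lfloor n_i/2\rfloor$ plus $\lceil|X|/2\rceil$ or $\lfloor|X|/2\rfloor$ plus $1$ never exceed $\lceil n/2\rceil$ under any parity — is already discharged in those two lemmas. The only point one must not overlook is precisely the base case: a non-roundable local tournament that happens to be a tournament falls outside the section's standing assumption, and must be routed to Theorem~\ref{th:aline-tournament} before invoking the decomposition.
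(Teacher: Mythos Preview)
Your proposal is correct and follows the paper's own approach: the proof of Lemma~\ref{lem:notround} in the paper is precisely the one-line combination of Lemmas~\ref{lem:ldnotround} and~\ref{lem:sizenotround} that you describe. Your explicit treatment of the tournament base case via Theorem~\ref{th:aline-tournament} is a careful addition---the paper disposes of that case in the prose preceding the section's standing hypothesis rather than inside the proof of the lemma itself---but otherwise the arguments coincide.
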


\begin{proof}
By Lemmas~\ref{lem:ldnotround} and~\ref{lem:sizenotround}, the set $S$ is a locating dominating set of $D$ of size at most $\lceil n/2 \rceil$.
\end{proof}

Theorem~\ref{th:main} is a direct consequence of Lemmas~\ref{lem:roundbound} and~\ref{lem:notround}.

\section{Supervising vertex}\label{sec:supervising}

From now on, we consider digraphs that are not necessarily simple.
In a digraph $D$, a \emph{supervising} vertex is a vertex $s$  of $D$ such that, for any vertex $v$, there exists a directed path from $s$ to $v$. In this section we prove the following theorem.

\begin{theorem}
\label{th:supervising}
Let $D$ be a twin-free digraph on $n$ vertices containing a supervising vertex, then 
$\gamma^{LD}(D) \leq \frac{3n}{4}$.
Moreover, if $D$ is quasi-twin-free, then 
$\gamma^{LD}(D) \leq \frac{2n}{3}$.
\end{theorem}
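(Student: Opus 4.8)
The plan is to follow the spirit of Section~\ref{sec:round}: extract from the supervising vertex a convenient spanning structure, build the locating‑dominating set greedily along it through a short case analysis, and control its size by a per‑block counting argument with a single unit of global slack.

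\emph{Setup.} Let $s$ be a supervising vertex of $D$. Since every vertex is reachable from $s$ by a directed path, $D$ has a spanning out‑arborescence (out‑branching) $T$ rooted at $s$; equivalently, order the vertices $v_1=s,v_2,\dots,v_n$ so that every $v_i$ with $i\ge 2$ has an in‑neighbour among $v_1,\dots,v_{i-1}$ (a breadth‑first order from $s$), and let $p(v_i)$ denote a choice of such in‑neighbour. Instead of building $S$ directly I would build its complement $U=V(D)\setminus S$, using the reformulation that $S$ is locating‑dominating if and only if every $u\in U$ has an in‑neighbour outside $U$ and the map $u\mapsto N^-(u)\setminus U$ is injective on $U$. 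A source of $D$ (there is at most one, since two sources are twins, and it could only be $s$ itself) is automatically kept in $S$ by the domination requirement, so it needs no special treatment.

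\emph{Construction.} I would decide the status of $v_1,\dots,v_n$ — processing from $v_n$ down to $v_1$, as in Section~\ref{sec:round} — greedily sending the current vertex to $U$ whenever that keeps the partial solution valid, and otherwise forming a small block of not‑yet‑processed vertices (the current vertex together with its parent in $T$ and at most one or two further vertices of the subtree they span) and sending only one vertex of the block to $U$. The configurations that force a vertex into $S$ are the analogues of Cases~1--3 of Section~\ref{sec:round}: a vertex needed to dominate an already‑processed vertex; a vertex needed to separate two already‑processed vertices; and a quasi‑twin pair $y\to x$ with $N^-(x)=N^-(y)\cup\{y\}$, for which $x\in U$ forces $y\in S$, since otherwise $x$ and $y$ share the trace $N^-(y)\setminus U$. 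When $D$ is quasi‑twin‑free the last case disappears, the blocks can be taken of size $3$ contributing $2$ vertices to $S$, and one recovers the extremal ratio witnessed by the ``triangle'' gadgets of Figure~\ref{fig:tight}, giving $|S|\le \frac{2n}{3}$; in the general twin‑free case a quasi‑twin pair trapped inside a block forces the block to have size $4$ with $3$ vertices kept in $S$, giving $|S|\le \frac{3n}{4}$. Throughout, one maintains an invariant in the spirit of Lemmas~\ref{claim:dominating}--\ref{claim:locating}: the already‑decided part of $S$ is a locating set of the processed induced subdigraph with at most one undominated vertex, chosen as early as possible in the order, so that the single global unit of slack is cashed in exactly once at the very end, mirroring the passage from $S$ to $S^{+}$.

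\emph{Main obstacle.} As in Section~\ref{sec:round}, the genuinely delicate point is the location condition for a pair $u,w$ lying in unrelated branches of $T$ whose retained in‑neighbourhoods happen to coincide: this can only occur when both traces are trivial or reduced to a common ``ancestor'' vertex, and one separates $u$ from $w$ using a retained vertex nearer to $u$, whose existence is guaranteed by an analogue of Lemma~\ref{claim:sizetwoset} together with the breadth‑first ordering. The tight example of Figure~\ref{fig:tight} shows this is precisely where the $\frac{2}{3}$ ratio comes from: every ``triangle'' vertex has the same trace $\{t\}$ with respect to $\{s,t\}$, so two of every three must be retained. A secondary technical nuisance is that deleting vertices can create (quasi‑)twins, which is why the argument is best run once on $D$ rather than by a naive induction; if one insists on induction, the removed block must be chosen, using the breadth‑first order, so as not to create a forbidden pair, and the degenerate case where this is impossible forces $D$ to be ``path‑like'', for which $\gamma^{LD}(D)\le\lceil n/2\rceil$ is immediate.
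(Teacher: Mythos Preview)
Your plan tries to transplant the block-by-block machinery of Section~\ref{sec:round} to arbitrary digraphs with a supervising vertex, but the structural hypotheses that make that machinery work are simply absent here. In Section~\ref{sec:round} the blocks $D_k$ are \emph{tournaments} carved out by the round decomposition, so Theorem~\ref{th:aline-tournament} controls each block and, crucially, Lemma~\ref{claim:sizetwoset} holds because any vertex outside a $T_i$ dominates either all of $T_i$ or none of it. For a general digraph there is no analogue: two vertices in unrelated branches of your out-arborescence can have arbitrary, partially overlapping in-neighbourhoods in $D$, and nothing in a BFS order forces a ``retained vertex nearer to $u$'' to separate $u$ from $w$. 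Your ``blocks of size $3$ (or $4$) contributing $2$ (or $3$) vertices to $S$'' are never actually defined, and the claim that such a packing exists and yields a locating set is precisely the content of the theorem---you have not supplied the mechanism that produces it. The acknowledged ``main obstacle'' is therefore not a technical nuisance but the whole proof.

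The paper's route is entirely different and avoids this obstacle. From the supervising vertex one builds, layer by layer along the BFS stratification, a dominating set $S$ that is \emph{minimal} on each layer; minimality gives every $s\in S$ a private dominated vertex $f(s)$, so the $S$-partition $\mathcal P_S$ of $V(D)\setminus S$ satisfies $|\mathcal P_S|\ge |S|-1$ (Lemma~\ref{lem:supervising}). One then invokes a general lemma (Lemma~\ref{th:aline}, a slight sharpening of \cite[Theorem~8]{Aline}): maximise $S$ subject to $|\mathcal P_S|\ge |S|-1$, and compare two candidate locating-dominating sets, $X_1=S$ together with the singleton parts (augmented on quasi-twin pairs if necessary) and $X_2=V(D)$ minus one vertex per part of $\mathcal P_S$; a short counting argument shows that the smaller of the two has size at most $\tfrac{2n}{3}$ in the quasi-twin-free case and $\tfrac{3n}{4}$ otherwise. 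The key idea you are missing is this $S$-partition/two-candidate trade-off, which replaces any per-block control by a global pigeonhole.
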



To prove Theorem \ref{th:supervising}, we will adapt the method used in \cite{heia} and \cite{Aline} to prove general upper bounds on $\gamma^{LD}$.
Let $S$ be a set of vertices of a digraph $D$. The {\em $S$-partition} of $D$, denoted $\mathcal P_S$, is the partition of $V(D)\setminus S$ where two vertices are in the same part if and only if they have the same set of in-neighbours in $S$.
We have the following lemma.

\begin{lemma}
\label{th:aline}
Let $D$ be a twin-free digraph on $n$ vertices and $S$ a dominating set of $D$ such
that $|\mathcal P_S|\geq |S|-1$. Then, 
$\gamma^{LD}(D) \leq \frac{3n}{4}$.
Moreover, if $D$ is quasi-twin-free, then 
$\gamma^{LD}(D) \leq \frac{2n}{3}$.
\end{lemma}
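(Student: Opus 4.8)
The plan is to build a locating-dominating set out of the dominating set $S$ together with a carefully chosen subset of $V(D)\setminus S$, and to bound its size by a counting argument over the parts of the $S$-partition $\mathcal P_S$. First I would observe that for \emph{any} set $A\subseteq V(D)\setminus S$, the set $S\cup (V(D)\setminus S\setminus A)$ is dominating (since $S$ already is) and it is locating exactly when $A$ contains at most one vertex from each part of $\mathcal P_S$ and, within each such part, no two remaining vertices are twins \emph{inside the part} — but since $D$ is twin-free and all vertices of a part already share the same in-neighbours in $S$, the only way two vertices $u,v$ of the same part can fail to be separated is if some vertex outside $S$ separates them, so actually we must be more careful: the standard trick is that $S\cup(V(D)\setminus S\setminus A)$ is automatically locating as soon as $A$ meets each part of $\mathcal P_S$ in at most one vertex, because any two non-$S$ vertices in different parts are separated by $S$, and any two in the same part, if both are outside $A$, are both in the final set and hence need not be separated. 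So the task reduces to: choose $A\subseteq V(D)\setminus S$ hitting each part in at most one vertex, as large as possible, i.e. $|A|\le |\mathcal P_S|$ with equality achievable by picking one vertex per part.

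With $A$ a transversal of $\mathcal P_S$ of size $|\mathcal P_S|$, the resulting locating-dominating set has size $n-|A| = n-|\mathcal P_S|$. Combining with the trivial bound $\gamma^{LD}(D)\le |S|$ (every dominating set that is "large enough" relative to its partition — actually we just use $S$ itself is not necessarily locating, so instead use that $S\cup(V(D)\setminus S\setminus A)$ works), we get two upper bounds on $\gamma^{LD}(D)$, namely $|S|$-type bounds and $n-|\mathcal P_S|$. More precisely: the set $S$ together with the complement of a transversal gives $\gamma^{LD}(D)\le n-|\mathcal P_S|$; on the other hand $S$ plus one vertex from each part that has size $\ge 2$ — wait, the cleaner second bound is simply $\gamma^{LD}(D)\le |S| + (\text{number of parts of size} \ge 2)$? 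I would instead use the two clean inequalities $\gamma^{LD}(D)\le n-|\mathcal P_S|$ and, via the hypothesis $|\mathcal P_S|\ge |S|-1$ together with $|S|+\sum_{P\in\mathcal P_S}|P| = n$, derive a bound in terms of $|S|$ alone. Indeed $n = |S| + \sum_P |P| \ge |S| + |\mathcal P_S| \ge 2|S|-1$, so $|S|\le (n+1)/2$, which already beats $3n/4$ for $n\ge 2$; but that is not quite the route since $S$ itself need not be locating. The correct second estimate: take $A$ = a transversal of the parts of size $\ge 2$ only (one vertex from each big part), so $|A| = $ (number of big parts) $=: b$, and $S\cup(V(D)\setminus S\setminus A)$ is locating-dominating of size $n-b$; alternatively keep all of $V(D)\setminus S$ except take the whole transversal $A$ of \emph{all} parts, size $n-|\mathcal P_S|$. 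I would then balance: if many vertices lie in big parts, $n-|\mathcal P_S|$ is small; if few do, then $|\mathcal P_S|$ is close to $n-|S|$ and combined with $|\mathcal P_S|\ge|S|-1$ one bounds $|S|$ and uses $\gamma^{LD}(D)\le |S| + (\text{something small})$.

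The real heart, and the main obstacle, is the quasi-twin-free refinement from $3n/4$ down to $2n/3$: here one must exploit that parts of size $\ge 2$ cannot be "too large" or "too numerous" without creating quasi-twins. Concretely, within a single part $P$ of $\mathcal P_S$ all vertices have identical in-neighbourhoods in $S$; quasi-twin-freeness then constrains how the vertices of $P$ interact with $V(D)\setminus S$ and with each other, and the standard argument (as in \cite{heia,Aline}) is that a part of size $p\ge 2$ forces at least $p-1$ \emph{private} vertices or arcs that can be charged so that each big part of size $p$ effectively "costs" at least $\lceil 3p/2\rceil$ or so vertices in the complement count, yielding the $2/3$ ratio after optimizing. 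I would handle this by a weighting/discharging scheme: assign to each part $P$ a cost, show $\sum$ costs $\ge$ (the size of the LD-set we build), show $\sum$ costs $\le \tfrac{2}{3}n$ using $|\mathcal P_S|\ge|S|-1$ and the quasi-twin-free structural constraints inside big parts, with the extremal case matching the tight example of Figure~\ref{fig:tight} (parts of size $1$ attached to triangles). The twin-free-only case is the same scheme with weaker per-part constraints giving $3/4$.

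\medskip

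(Note: the author's actual proof will likely phrase the construction as "delete a transversal" and split into the two regimes; the subtle point I flag above — that $S$ need not itself be locating, only $S$ plus the non-transversal vertices — is exactly where care is needed, and the quasi-twin-free gain is the step I expect to consume most of the work.)
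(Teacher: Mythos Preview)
Your proposal has the right first ingredient: deleting a full transversal of $\mathcal P_S$ yields a locating-dominating set of size $n-|\mathcal P_S|$ (this is the paper's set $X_2$). But the second, complementary bound you grope for is where the real idea lies, and you miss it. The key step in the paper is to first \emph{enlarge $S$ to be maximal} subject to the hypothesis $|\mathcal P_S|\ge|S|-1$ (just keep adding vertices while the inequality survives). Maximality then buys the following: if two vertices $q_1,q_2$ lie in the same big part $Q_i$ and are not quasi-twins, any vertex $q_3$ that separates them (one exists by twin-freeness) must itself lie in some big part; adding $q_3$ to $S$ removes it from a part of size $\ge2$ but splits $Q_i$, so $|\mathcal P_{S'}|\ge|\mathcal P_S|+1\ge|S'|-1$, contradicting maximality. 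Hence $X_1:=S\cup(\text{all singleton parts})$ already locates every pair outside it, except possibly quasi-twin pairs.

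In the quasi-twin-free case this makes $X_1$ itself locating-dominating, of size $|S|+n_1$ where $n_1$ is the number of singleton parts. Balancing $|X_1|=|S|+n_1\le(n_1+n_2+1)+n_1\le 2|\mathcal P_S|$ against $|X_2|=n-|\mathcal P_S|$ immediately gives $\min(|X_1|,|X_2|)\le 2n/3$. For the general twin-free case one shows the quasi-twin pairs inside the big parts are pairwise disjoint, adds one vertex from each pair to $X_1$ to obtain $X_1'$ of size at most $(n+|S|+n_1)/2$, and balances against $X_2$ to get $3n/4$. None of this uses any ``per-part structural constraint'' or discharging; your proposed weighting scheme is not the mechanism, and without the maximality trick your two candidate sets ($n-|\mathcal P_S|$ and $n-n_2$) collapse to a single bound that does not by itself yield either conclusion.
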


This result is proved in \cite{Aline} when $|\mathcal P_S|\geq |S|$ (Theorem~8 for $x=1$). 
The proof can be adapted if we only have $|\mathcal P_S|\geq |S|-1$.

\begin{proof}
Let $\mathcal P_S=P_1 \cup\cdots\cup P_{n_1}\cup Q_1\cup\cdots \cup Q_{n_2}$, where $P_1,\ldots, P_{n_1}$ are the parts of size~$1$ and $Q_1,\ldots, Q_{n_2}$ are the parts of size at least~$2$.

If $n_2=0$, then $S$ is a locating-dominating set of $D$. Since all the parts of $\mathcal P_S$ have size 1, $|V(D)|=|S|+|\mathcal P_S|\geq 2|S|-1$. Thus $|S|\leq \frac{n+1}{2}$ and we are done. Thus in the following we assume that $n_2>0$.

We assume that $S$ is maximal with the property that $\mathcal P_S$ has at least $|S|-1$ parts (this is ensured by adding vertices to $S$ while this property holds).

Now, let $X_1 = S\cup P_1 \cup\cdots\cup P_{n_1}$. We have the following property:

\begin{claim}\label{claim:D1}
Two vertices in $V(D)\setminus X_1$ are located by $X_1$, unless they form a pair of quasi-twins.
\end{claim}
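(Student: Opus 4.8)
The goal is to show that two vertices $u,v \in V(D)\setminus X_1$ are separated by $X_1$ unless $\{u,v\}$ is a pair of quasi-twins. The key observation is that $X_1 = S \cup P_1 \cup \cdots \cup P_{n_1}$ contains $S$ together with one representative from each singleton part of $\mathcal P_S$. I will first dispose of the easy cases according to whether $u$ and $v$ lie in the same part of $\mathcal P_S$, and then use the maximality of $S$ to handle the remaining case.

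\textbf{Key steps.} First, suppose $u$ and $v$ belong to different parts of $\mathcal P_S$. Then by definition of $\mathcal P_S$ they already have distinct sets of in-neighbours in $S \subseteq X_1$, so some $s \in S$ separates them and we are done. Second, suppose $u$ and $v$ belong to the same part of $\mathcal P_S$; since the singleton parts $P_1,\ldots,P_{n_1}$ are entirely contained in $X_1$, this common part must be some $Q_j$ of size at least $2$, so in particular $u,v \notin X_1$ is consistent. Now $u$ and $v$ have the same in-neighbours in $S$, hence the same in-neighbours in $X_1 \cap (S \cup \{$other representatives$\})$ except possibly from vertices of $P_1 \cup \cdots \cup P_{n_1}$. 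So if $X_1$ fails to separate $u$ and $v$, then no $P_i$-vertex separates them either, i.e. $u$ and $v$ are in the same part of $\mathcal P_{X_1}$. I would then invoke the maximality of $S$: if $u$ and $v$ had distinct closed-or-open in-neighbourhoods in $D$ that were not a quasi-twin or twin relationship, one could add one of them to $S$ and still keep $\mathcal P$ with at least $|S|-1$ parts, contradicting maximality. More precisely, since $D$ is twin-free, $u$ and $v$ are not twins, so $N^-(u) \neq N^-(v)$; combined with the fact that they are not separated by $X_1$ (so their in-neighbourhoods differ only inside $V(D)\setminus X_1$, and in a controlled way because they lie in the same $Q_j$), the only remaining possibility is that they form a pair of quasi-twins, i.e. $N^-(u) = N^-[v]$ or $N^-(v) = N^-[u]$, which forces $u \to v$ or $v \to u$.

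\textbf{Main obstacle.} The delicate point is the bookkeeping in the last case: showing that if $u,v$ are in the same part $Q_j$ and are not separated by $X_1$, then their in-neighbourhoods in all of $D$ can differ \emph{only} by one of them being an in-neighbour of the other. Adding $u$ to $S$ splits $Q_j$ (it separates $u$ from the rest of $Q_j$ via the new singleton, and it separates $v$ from any vertex of $Q_j$ it dominates but does not $\ldots$), and one must check that this does not destroy the part-count bound $|\mathcal P| \ge |S|-1$ unless $u$ is forced to be a quasi-twin of $v$. This is where the $-1$ slack (as opposed to the $|\mathcal P_S|\ge|S|$ hypothesis in the original \cite{Aline} statement) must be used carefully, and it is the heart of the argument; the rest is routine case analysis.
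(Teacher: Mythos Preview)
Your outline is correct up to the point where $u$ and $v$ lie in the same part $Q_j$ and are not separated by $X_1$. The gap is in how you invoke maximality. You propose to add $u$ (or $v$) to $S$ and claim this splits $Q_j$. But adding $u$ to $S$ only \emph{removes} $u$ from $Q_j$; it splits a part only if $u$ dominates some but not all vertices of that part. Nothing guarantees this: $u$ might have no out-neighbours in $V(D)\setminus S$ at all, in which case $|\mathcal P_{S\cup\{u\}}| = |\mathcal P_S| \ge |S|-1 = |S\cup\{u\}|-2$, and no contradiction with maximality arises. Your sentence ``the only remaining possibility is that they form a pair of quasi-twins'' is therefore a non-sequitur: knowing that $N^-(u)$ and $N^-(v)$ agree on $X_1$ tells you only that $N^-(u)\,\triangle\, N^-(v) \subseteq V(D)\setminus X_1$, not that this symmetric difference is contained in $\{u,v\}$.

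The paper's fix is to add a \emph{third} vertex. Since $u,v$ are neither twins nor quasi-twins, a short case analysis shows there exists $q_3 \notin \{u,v\}$ dominating exactly one of $u,v$; since $u,v$ agree on $X_1$, necessarily $q_3 \in V(D)\setminus X_1$, so $q_3$ lies in some $Q_\ell$ of size at least $2$. Now set $S' = S \cup \{q_3\}$: removing $q_3$ from $Q_\ell$ leaves it nonempty, and $q_3$ genuinely splits $Q_j$ (it dominates $u$ but not $v$, or vice versa, and both remain outside $S'$). Hence $|\mathcal P_{S'}| \ge |\mathcal P_S| + 1 \ge |S| = |S'|-1$, contradicting the maximality of $S$. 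The ``$-1$ slack'' plays no special role here beyond this arithmetic; the heart of the argument is choosing the right vertex to add.
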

\claimproof If two vertices are in different parts of $\mathcal P_S$, they are located by some vertices in $S$. Thus, by contradiction, let $q_1$ and $q_2$ be two vertices of $V(D) \setminus X_1$ belonging to some part $Q_i$ of $\mathcal P_S$ that are not quasi-twins but are not located by $X_1$. Since $D$ is twin-free, there is a vertex $q_3$ in $V(D)\setminus S$ that can locate $q_1$ and $q_2$: without loss of generality $q_3$ is an in-neighbour of $q_1$ but not $q_2$. By our assumption $q_3 \notin X_1$. Now, consider $S'= S\cup \lbrace q_3\rbrace$ and the corresponding $S'$-partition $\mathcal P_{S'}$ of $V(D)\setminus S'$. Since $q_3\in\bigcup_i Q_i$, any part of $\mathcal P_S$ still correspond to some part in $\mathcal P_{S'}$. But $Q_i$ has been split into two parts so $P_{S'}$ has at least one more part than $\mathcal P_S$, and thus, $|\mathcal P_{S'}|\geq |S'|-1$. This contradicts the choice of $S$, which we assumed to be maximal with this property.~\smallqed

\medskip

Since $X_1$ is a dominating set, Claim~\ref{claim:D1} shows that in the absence of quasi-twins, $X_1$ is locating-dominating. Next claim is proved in \cite[Claim 8.B]{Aline}  to deal with quasi-twins. 

\begin{claim}[\cite{Aline}]\label{claim:quasi-twins}
Any two pairs of quasi-twins in $V(D)\setminus X_1$ are disjoint.
\end{claim}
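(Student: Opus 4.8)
The plan is to prove Claim~\ref{claim:quasi-twins} by contradiction: suppose two pairs of quasi-twins in $V(D)\setminus X_1$ are not disjoint, say they share a vertex $q$, so we have quasi-twin pairs $\{q,a\}$ and $\{q,b\}$ with $a\neq b$. Since $q,a$ lie in the same part $Q_i$ of $\mathcal P_S$ and $q,b$ in some part $Q_j$, and $a\neq b$ are not located by $X_1$, all three of $q,a,b$ must in fact lie in the same part $Q_i=Q_j$ (otherwise $S$ would already locate $a$ from $b$). Recall the definition of quasi-twins: $N^-(x)=N^-[y]$ or $N^-(y)=N^-[x]$. So for the pair $\{q,a\}$, either $N^-(q)=N^-(a)\cup\{a\}$ or $N^-(a)=N^-(q)\cup\{q\}$; similarly for $\{q,b\}$. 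This splits into (up to symmetry) a few subcases according to which of $q,a$ (resp.\ $q,b$) is the ``larger'' one. The first thing I would do is fix notation for these subcases and note that within a single part $Q_i$ all vertices have the same in-neighbours \emph{in $S$}, so the quasi-twin relation constrains only the in-neighbours lying in $Q_i$ itself.

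The key step is to derive a contradiction in each subcase by finding a vertex of $S$ (or more precisely using the structure of $\mathcal P_S$ and maximality of $S$) that distinguishes the two pairs, or by showing that $q$, $a$, $b$ would force a twin pair, contradicting twin-freeness of $D$. Concretely: suppose $N^-(a)=N^-(q)\cup\{q\}$ and $N^-(b)=N^-(q)\cup\{q\}$. Then $N^-(a)=N^-(b)$, so $a$ and $b$ are twins in $D$ --- contradiction. Suppose instead $N^-(q)=N^-(a)\cup\{a\}$ and $N^-(q)=N^-(b)\cup\{b\}$. Then $N^-(a)\cup\{a\}=N^-(b)\cup\{b\}$. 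Since $a\neq b$ and $b\notin N^-(b)$, we get $b\in N^-(a)$ and symmetrically $a\in N^-(b)$; removing these gives $N^-(a)\setminus\{b\}=N^-(b)\setminus\{a\}$, i.e.\ $N^-[a]=N^-[b]$ (up to checking $a\in N^-[a]$, $b\in N^-[b]$), so again $a,b$ are twins --- contradiction. The remaining mixed subcase is $N^-(a)=N^-(q)\cup\{q\}$ and $N^-(q)=N^-(b)\cup\{b\}$ (or its mirror). Here I would substitute to get $N^-(a)=N^-(b)\cup\{b,q\}$, and then argue that this makes $q$ dominate $a$ but puts $q$ in a strange position relative to $b$; the cleanest route is probably to observe that $b\in N^-(q)$ and $q\in N^-(a)$ but the in-neighbourhood equalities pin down enough structure to exhibit a twin pair or to show $a,b,q$ cannot all sit in one part $Q_i$ of $\mathcal P_S$.

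I expect the main obstacle to be the mixed subcase, where neither a direct twin contradiction nor a direct locating contradiction is immediate; one likely needs to re-examine whether $a$ and $b$ are actually located by $S$ after all, or to use the maximality of $S$ as in Claim~\ref{claim:D1} (adding a suitable vertex to $S$ would split a part and keep $|\mathcal P_S|\geq|S|-1$). A careful bookkeeping of which vertices among $q,a,b$ belong to $N^-(\cdot)$ of which others, combined with the fact that $D$ is loopless (so $x\notin N^-(x)$), should close it. Since this excerpt cuts off right at the claim statement, the intended proof in the paper is likely this short case analysis, and I would present it as such, keeping each subcase to one or two lines and flagging the mixed case as the one requiring the maximality argument.

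\begin{proof}[Proof sketch of Claim~\ref{claim:quasi-twins}]
Suppose for contradiction that $\{q,a\}$ and $\{q,b\}$ are two non-disjoint pairs of quasi-twins in $V(D)\setminus X_1$ with $a\neq b$. As argued above, $q,a,b$ all lie in a common part $Q_i$ of $\mathcal P_S$, so they share the same in-neighbours in $S$; hence the quasi-twin relations only concern in-neighbours inside $V(D)\setminus S$, and in fact inside $Q_i$. We split into subcases according to the direction of the quasi-twin relation for each pair. If $N^-(a)=N^-(q)\cup\{q\}$ and $N^-(b)=N^-(q)\cup\{q\}$, then $N^-(a)=N^-(b)$ and $a,b$ are twins, contradicting twin-freeness. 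If $N^-(q)=N^-(a)\cup\{a\}$ and $N^-(q)=N^-(b)\cup\{b\}$, then $N^-(a)\cup\{a\}=N^-(b)\cup\{b\}$; since the digraph is loopless this forces $N^-[a]=N^-[b]$, so $a,b$ are twins, again a contradiction. In the remaining case, up to exchanging $a$ and $b$, we have $N^-(a)=N^-(q)\cup\{q\}$ and $N^-(q)=N^-(b)\cup\{b\}$, whence $N^-(a)=N^-(b)\cup\{b,q\}$ with $b\in N^-(q)$ and $q\in N^-(a)$. Consider $S'=S\cup\{b\}$: since $b\in Q_i$, every part of $\mathcal P_S$ survives in $\mathcal P_{S'}$, and $b$ is an in-neighbour of $q$ (hence of $q$ but not of $b$ itself), so the part $Q_i$ is split; thus $|\mathcal P_{S'}|\geq|\mathcal P_S|+1\geq|S|=|S'|-1+1>|S'|-1$, contradicting the maximality of $S$. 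In all cases we reach a contradiction, so any two pairs of quasi-twins in $V(D)\setminus X_1$ are disjoint.
\end{proof}
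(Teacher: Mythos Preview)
The paper does not prove Claim~\ref{claim:quasi-twins} itself; it simply cites \cite{Aline} (Claim~8.B there). Your case analysis is a reasonable reconstruction. The reduction to a single part $Q_i$ is correct (quasi-twins outside $S$ have the same in-neighbours in $S$), and Cases~1 and~2 are fine: in each you exhibit $a,b$ as twins of $D$, contradicting twin-freeness.

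The mixed case, however, has a genuine gap. From $N^-(a)=N^-(q)\cup\{q\}$ and $N^-(q)=N^-(b)\cup\{b\}$ you correctly obtain $N^-(a)=N^-(b)\cup\{b,q\}$, so $b\to q$, $b\to a$, and $q\to a$. You then set $S'=S\cup\{b\}$ and claim that $Q_i$ splits because ``$b$ is an in-neighbour of $q$ but not of $b$ itself''. But once $b$ is moved into $S'$, it is no longer in $V(D)\setminus S'$; what matters is whether $b$ distinguishes two vertices of $Q_i\setminus\{b\}$. Since $b$ dominates \emph{both} $q$ and $a$, if $Q_i=\{q,a,b\}$ (nothing rules this out) then $Q_i\setminus\{b\}=\{q,a\}$ is not split by $b$, $|\mathcal P_{S'}|$ need not exceed $|\mathcal P_S|$, and the maximality of $S$ is not contradicted.

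The fix is to add $q$ instead of $b$. From $N^-(a)=N^-(q)\cup\{q\}$ we get $q\to a$; from $N^-(q)=N^-(b)\cup\{b\}$ and looplessness, $q\in N^-(b)$ would force $q\in N^-(q)$, so $q\not\to b$. Hence $q$ separates $a$ and $b$ inside $Q_i\setminus\{q\}$, the part genuinely splits, and $|\mathcal P_{S'}|\geq|\mathcal P_S|+1\geq|S|=|S'|-1$, contradicting the maximality of $S$ exactly as in the proof of Claim~\ref{claim:D1}.
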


For each pair of quasi-twins in $V(D)\setminus X_1$, we add one of the vertices of the pair in $X_1$. By Claims \ref{claim:D1} and \ref{claim:quasi-twins}, the resulting set $X_1'$ is a locating-dominating set and has size at most $|S|+n_1+(n-|S|-n_1)/2=(n+|S|+n_1)/2$.

Consider now the set $X_2$ of size $n-n_1-n_2$ consisting of $V(D)$ without one vertex from each part of $\mathcal P_S$. Then all the vertices of $V(D)\setminus X_2$ are located and dominated by $S$ and thus $X_2$ is a locating dominating set.

Assume now that $D$ has no quasi-twins. Then $X_1$ and $X_2$ are two locating-dominating sets of $D$. If $|X_2|\leq 2n/3$ we are done. Thus we assume that $|X_2|>2n/3$ which means that $|\mathcal P_S|<n/3$.
Therefore,

\begin{align*}
|X_1| &=|S|+n_1 \\
&\leq |\mathcal P_S|+n_1 +1\\
&\leq |\mathcal P_S| + (n_1+ n_2) \quad \quad \text{since } n_2\geq 1\\
&\leq 2|\mathcal P_S| \\
&\leq 2n/3
\end{align*}
and we are done.

If $D$ has some quasi-twins, we use the locating-dominating sets $X_1'$ and $X_2$. Again, if $|X_2|\leq 3n/4$, we are done. So, assume that $|X_2|> 3n/4$. Then, $|\mathcal P_S|<n/4$. 

Therefore,
\begin{align*}
|X_1'| &=\frac{|S|+n+n_1}{2}\\
&\leq \frac{|\mathcal P_S|+1+n+n_1}{2}\\
&\leq \frac{|\mathcal P_S|+n+n_1+n_2}{2}\\
&\leq |\mathcal P_S|+\frac{n}{2}\\
&\leq \frac{n}{4}+\frac{n}{2}\\
&= \frac{3n}{4}\;
\end{align*}
and we are done.
\end{proof}

To apply Lemma~\ref{th:aline}, we prove that such a set $S$ exists when there is a supervising vertex.
\begin{lemma}
\label{lem:supervising}
If a digraph $D$ contains a supervising vertex $s$, then, there exists a dominating set $S$ such that $|\mathcal P_{S}|\geq |S|-1$.
\end{lemma}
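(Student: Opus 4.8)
The plan is to construct $S$ greedily by starting from a spanning out-branching rooted at the supervising vertex $s$ and then pruning it. Since $s$ reaches every vertex by a directed path, $D$ contains a spanning out-arborescence $T$ rooted at $s$ (take a shortest-path out-tree from $s$). The leaves of $T$ are dominated by their parents, and more generally every internal vertex of $T$ dominates all of its children. The natural first candidate is to take $S$ to be the set of internal (non-leaf) vertices of $T$: this is a dominating set of $D$, since every leaf has its $T$-parent in $S$ and every internal vertex other than $s$ also has its parent in $S$, while $s$ itself belongs to $S$ whenever it has a child, i.e. whenever $n\geq 2$ (the case $n=1$ being trivial). So $S$ dominates $D$.

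Next I would bound $|\mathcal P_S|$ from below. The parts of $\mathcal P_S$ partition the leaves of $T$ (which are exactly $V(D)\setminus S$) according to their in-neighbourhood inside $S$. The key observation is that distinct internal vertices of $T$ tend to produce distinct parts: if $u$ is an internal vertex with child $\ell$ a leaf, then $\ell$ lies in the part of vertices dominated by $u$, and by choosing the out-branching carefully (e.g. a BFS tree, so that the parent is an in-neighbour and we can control which leaves attach where) we can arrange that the map sending each internal vertex to "one of its leaf-children, if it has one" is essentially injective on parts. The subtle point is internal vertices with no leaf-child; to handle these I would instead argue by a maximality/exchange argument in the spirit of the proof of Lemma~\ref{th:aline}: start with $S$ as above and, as long as $|\mathcal P_S| < |S|-1$, show we can delete a vertex from $S$ without destroying domination, contradicting minimality, or directly count that $|\mathcal P_S|\geq (\text{number of internal vertices with a leaf child}) \geq |S| - (\text{number of internal vertices all of whose children are internal})$ and then absorb the latter by re-rooting / contracting long internal paths in $T$.

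Concretely, the cleanest route is: take $T$ a spanning out-branching of $D$ rooted at $s$ chosen to \emph{minimize the number of internal vertices}, equivalently to maximize the number of leaves. Let $S$ be its internal vertices. I claim $|\mathcal P_S|\geq |S|-1$. Indeed, suppose $v\in S$ has all its children in $T$ also in $S$ (i.e. no leaf-child). If $v\neq s$, then re-attaching: since every child $c$ of $v$ is internal, $c$ has a child; but a more robust statement is that one shows by the leaf-maximality of $T$ that the internal vertices without a leaf-child form a set that, together with the root, still leaves at least $|S|-1$ distinct in-neighbourhood-patterns among the leaves — each leaf-bearing internal vertex $u$ yields the pattern of its leaf-child $\ell_u$, and two such $\ell_u,\ell_{u'}$ with $u\neq u'$ have $u$ in the in-neighbourhood of one but (if chosen via BFS levels, or after a local exchange) not the other, giving injectivity. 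Subtracting the root and at most the chain structure gives the $-1$ slack, which is exactly what Lemma~\ref{th:aline} tolerates.

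The main obstacle will be pinning down the injectivity of the "internal vertex $\mapsto$ part of its leaf-child" correspondence: an internal vertex $u$ could dominate many leaves, and two different internal vertices could dominate exactly the same set of leaves, collapsing parts. I expect the resolution is precisely the leaf-maximality (or a maximality-of-$S$ exchange argument as in Claim~\ref{claim:D1}): if two internal vertices $u\neq u'$ had their leaf-children in the same part, one could reroute $T$ to strictly increase the number of leaves or apply the exchange step of the proof of Lemma~\ref{th:aline}, a contradiction — and the single exceptional vertex one may have to tolerate (typically involving $s$) accounts for the $|S|-1$ rather than $|S|$. Once $|\mathcal P_S|\geq|S|-1$ is established, Lemma~\ref{th:aline} finishes Theorem~\ref{th:supervising} immediately.
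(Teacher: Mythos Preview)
Your concrete proposal---take $S$ to be the internal vertices of a leaf-maximal spanning out-branching $T$ rooted at $s$---fails already on the directed path $s=v_0\to v_1\to\cdots\to v_m$: the only spanning out-branching has internal vertex set $\{v_0,\ldots,v_{m-1}\}$ and a single leaf $v_m$, so $|\mathcal P_S|=1$ while $|S|-1=m-1$, and the claimed inequality is false for every $m\geq 3$. Leaf-maximality is vacuous here (there is only one tree); your deletion argument also breaks, since removing any $v_i$ from $S$ leaves $v_{i+1}$ undominated. The difficulty is exactly the one you flagged but did not resolve: internal vertices of $T$ with no leaf-child contribute to $|S|$ without producing a part of $\mathcal P_S$, and no rerouting of $T$ repairs this when $D$ itself is a long path. ``Contracting long internal paths'' is not a choice of a spanning tree of $D$, and the exchange move of Claim~\ref{claim:D1} goes in the wrong direction (it enlarges $S$).

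The paper's argument sidesteps this by building $S$ layer by layer from the BFS levels $V_0,\ldots,V_k$ (vertices at distance~$i$ from $s$), working backwards: at each step one adds a \emph{minimal} subset of $V_{j-1}$ dominating $V_j\setminus S$. Minimality gives each newly added $v\in V_{j-1}$ a private vertex $f(v)\in V_j\setminus S$ whose only in-neighbour among the newly added vertices is $v$; and since a vertex in layer $j$ has no in-neighbour in any layer strictly shallower than $j-1$, private vertices lying at different depths are automatically in different parts of $\mathcal P_S$, while private vertices at the same depth are separated by their distinct private dominators. Thus $f$ injects $S$ into $\mathcal P_S$, giving $|\mathcal P_S|\geq |S|$, after which one may add $s$ (if not already dominated) at the cost of the ``$-1$''. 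On the path this construction yields the alternating set $\{v_{m-1},v_{m-3},\ldots\}$ rather than all of $\{v_0,\ldots,v_{m-1}\}$; this ``minimal dominating set per layer, with a private neighbour for each added vertex'' mechanism is precisely the idea your sketch is missing.
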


\begin{proof}
Consider a supervising vertex $s$ of $D$. For $i\geq 0$, let $V_i$ be the set of vertices of $D$ such that the shortest directed path from $s$ to $v$ has length $i$. Let $k$ be the smallest integer such that $V_{k+1}$ is empty. Since $s$ is supervising, $V_0,\ldots,V_k$ form a partition of $V(D)$ where the vertices are sorted according to their distance from $s$.
If $k=0$, then $D$ contains $s$ as a unique vertex and the set $S=\emptyset$ satisfies the lemma. So from now we assume that $k>0$.

We build the set $S$ by the following method. 
Let $S$ be a set of vertices of $V_{k-1}$ that dominates $V_k$ and that is minimal for this property.
Then, step by step, for $i=1$ to $k-1$, we assume that $S$ is a set of vertices of $V_{k-1}\cup \ldots \cup V_{k-i}$ that dominates $V_{k}\cup \ldots\cup V_{k-i+1}$ and we add to $S$ a set of vertices of $V_{k-i-1}$ that dominates $V_{k-i}\setminus S$ and that is minimal for this property. We continue the process until $i=k-1$. 

At each step $i$, by minimality of the choosen set, when a vertex $v$ of $V_{k-i-1}$ is added to $S$ one can choose a vertex $f(v)$ in $V_{k-i}\setminus S$ whose in-neighbourhood in $V_{k-i-1}\cap S$ is exactly $v$. By doing so, at the end of the procedure, each vertex of $f(S)$ has different in-neighbours in $S$. So  $|\mathcal P_{S}|\geq |S|$. 

Finally if $s$ is not already dominated by $S$, then we add it to $S$ so that $S$ is a dominating set of $V(D)$. This might increase the cardinality of $S$ by $1$ and in the end $S$ is a dominating set such that $|\mathcal P_S|\geq |S|-1$.
\end{proof}

Then Theorem \ref{th:supervising} is direct consequence of Lemmas~\ref{th:aline} and~\ref{lem:supervising}.

Since all the vertices in a strongly connected digraph are supervising, we have the following corollary:

\begin{corollary}
Let $D$ be a twin-free strongly connected digraph on $n$ vertices, then 
$\gamma^{LD}(D) \leq \frac{3n}{4}$.
Moreover, if $D$ is quasi-twin-free, then 
$\gamma^{LD}(D) \leq \frac{2n}{3}$.
\end{corollary}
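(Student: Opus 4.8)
The final statement to prove is the corollary: a twin-free strongly connected digraph on $n$ vertices satisfies $\gamma^{LD}(D)\leq \frac{3n}{4}$, and if additionally quasi-twin-free then $\gamma^{LD}(D)\leq\frac{2n}{3}$. The plan is to derive this as an immediate consequence of Theorem~\ref{th:supervising}, so the only thing that needs an argument is that a strongly connected digraph has a supervising vertex.

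\textbf{Main argument.} First I would observe that in a strongly connected digraph $D$, \emph{every} vertex is supervising: by definition of strong connectivity, for any two vertices $u$ and $v$ there is a directed path from $u$ to $v$, so in particular fixing any single vertex $s$, there is a directed path from $s$ to every other vertex $v$, which is exactly the definition of $s$ being a supervising vertex. (One should briefly note the degenerate case $n=1$: a single-vertex digraph is vacuously strongly connected, and the unique vertex is trivially supervising, so the statement is covered there too; also $\gamma^{LD}\le \frac{3n}{4}$ and $\frac{2n}{3}$ hold trivially when $n=1$ since $\gamma^{LD}=0$.) Having established the existence of a supervising vertex, I would simply invoke Theorem~\ref{th:supervising} with this vertex: the twin-free hypothesis gives $\gamma^{LD}(D)\le \frac{3n}{4}$, and the additional quasi-twin-free hypothesis upgrades this to $\gamma^{LD}(D)\le\frac{2n}{3}$.

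\textbf{Anticipated obstacle.} There is essentially no obstacle here — this is a one-line corollary. The only thing to be careful about is not to confuse ``strongly connected'' (which we are given) with merely ``connected'' (which would \emph{not} suffice, as the edgeless-graph remark after Theorem~\ref{th:main} shows the supervising-vertex hypothesis is genuinely needed). The entire content has already been done in the two prior lemmas: Lemma~\ref{lem:supervising} manufactures, from a supervising vertex, a dominating set $S$ with $|\mathcal{P}_S|\ge|S|-1$, and Lemma~\ref{th:aline} turns such a set into the desired bounds. So the proof is just the sentence ``every vertex of a strong digraph is supervising, hence Theorem~\ref{th:supervising} applies,'' which is exactly what the one-line proof in the excerpt asserts.
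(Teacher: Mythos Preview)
Your proof is correct and takes exactly the same approach as the paper: the paper simply remarks that every vertex of a strongly connected digraph is supervising, and deduces the corollary directly from Theorem~\ref{th:supervising}. Your additional comments about the $n=1$ case and the role of Lemmas~\ref{th:aline} and~\ref{lem:supervising} are fine but not needed.
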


Note that the second bound is asymptotically tight (see Figure \ref{fig:tight}). 
A digraph is called \emph{semicomplete} if there is at least one arc between every pair of vertices. A digraph $D$ is \emph{locally in-semicomplete} if the in-neighbourhood of every vertex $x$ of $D$ induces a semicomplete digraph. Note that semicomplete digraphs are a generalization of tournaments and thus, locally in-semicomplete digraphs naturally generalize local tournaments.

\begin{lemma}
\label{lem:localIN}
A locally in-semicomplete digraph has a supervising vertex.
\end{lemma}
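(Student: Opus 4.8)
The plan is to show that a locally in-semicomplete digraph $D$ has a vertex from which every other vertex is reachable by a directed path. First I would reduce to a statement about the condensation: contract each strongly connected component of $D$ to a single vertex, obtaining an acyclic digraph $\widehat D$. An acyclic digraph always has at least one source (a vertex with no in-neighbours), so it suffices to prove that $\widehat D$ has a \emph{unique} source; then any vertex of $D$ lying in that source component reaches every other strong component, hence every vertex of $D$, and is therefore supervising.

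To prove uniqueness of the source, I would argue by contradiction: suppose $\widehat D$ has two distinct sources $C_1$ and $C_2$, corresponding to strong components of $D$ with no arcs entering them from outside. Since $\widehat D$ is acyclic and finite, pick a common descendant if one exists; more carefully, consider a vertex $w$ of $\widehat D$ that is reachable from both $C_1$ and $C_2$ and is closest to them in the sense of being a first such meeting point — for instance, take a shortest directed path $P_1$ from $C_1$ to some component and a shortest directed path $P_2$ from $C_2$, and look at where these families of paths first coincide. At the meeting component $C$, there are arcs $a_1 \to c$ and $a_2 \to c$ in $D$ with $a_1$ on the $C_1$-side, $a_2$ on the $C_2$-side, and $a_1 \neq a_2$. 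Then $a_1$ and $a_2$ are both in-neighbours of the vertex $c$, so by local in-semicompleteness there is an arc between $a_1$ and $a_2$. That arc, combined with the paths back toward $C_1$ and $C_2$, would connect the two source components by a directed path in one direction, contradicting that both are sources of the acyclic condensation (a source component cannot be reached from outside).

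The main obstacle I expect is making the ``first meeting point'' argument clean: one needs to set things up so that the two vertices $a_1, a_2$ feeding into $c$ genuinely lie in different strong components and are genuinely distinct, and so that an arc between them forces a directed path from one source component to the other. The safe way to do this is to choose $c$ (equivalently $C$) to minimize, over all components reachable from both $C_1$ and $C_2$, some combined distance from $\{C_1, C_2\}$; minimality guarantees that the immediate predecessors of $C$ on the two realizing paths lie in components not reachable from both sources, in particular in distinct components, and that these predecessor components are themselves reachable from exactly one of $C_1, C_2$. An arc between representatives $a_1 \in A_1$ and $a_2 \in A_2$ of those predecessor components (guaranteed since both are in-neighbours of $c$) then gives an arc $A_1 \to A_2$ or $A_2 \to A_1$ in $\widehat D$ (or puts them in the same component, already a contradiction), which chains with the path from $C_1$ to $A_1$ and from $A_2$ to $C_2$, or vice versa, to yield a directed path $C_1 \rightsquigarrow C_2$ or $C_2 \rightsquigarrow C_1$ in $\widehat D$ — impossible since a source has in-degree $0$ in $\widehat D$. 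Hence the source is unique and any vertex in it is supervising.
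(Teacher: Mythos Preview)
Your condensation approach is genuinely different from the paper's. The paper instead takes a vertex $s$ whose reachable set $S$ has maximum size: if $S \neq V(D)$, connectedness yields $u \notin S$ with an out-neighbour in $S$, and choosing such an out-neighbour $v$ to minimize $d(s,v)$, the predecessor $w$ of $v$ on a shortest $s$--$v$ path and $u$ are both in-neighbours of $v$; in-semicompleteness then forces an arc between them, yet $w \to u$ would put $u$ in $S$ while $u \to w$ contradicts minimality of $d(s,v)$. This argument never leaves $D$, so it avoids the translation issue below.

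Your argument has a real gap at the step where you pass from $\widehat D$ back to $D$. You assert that at the meeting component $C$ there are arcs $a_1 \to c$ and $a_2 \to c$ to a \emph{common} vertex $c$, and later that $a_1,a_2$ are ``both in-neighbours of $c$''. But arcs $A_1 \to C$ and $A_2 \to C$ in $\widehat D$ only guarantee arcs $a_1 \to c_1$ and $a_2 \to c_2$ in $D$ for some $c_1,c_2 \in C$, possibly distinct, so in-semicompleteness at a single vertex is not yet available. The gap is repairable: take a directed path $c_1=p_0\to\cdots\to p_k=c_2$ inside the strong component $C$ and apply in-semicompleteness at $p_k$ to its in-neighbours $p_{k-1}$ and $a_2$; an arc $p_{k-1}\to a_2$ would merge $A_2$ with $C$ into a single strong component, while $a_2\to p_{k-1}$ lets you replace $c_2$ by $p_{k-1}$ and iterate until $c_1=c_2$. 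Two smaller points also need attention: minimizing an unspecified ``combined distance'' does not obviously force the predecessors $A_1,A_2$ outside the set of common descendants (taking $C$ minimal in the DAG order among common descendants does this cleanly), and you should invoke connectedness of $D$ to guarantee that at least one pair of sources of $\widehat D$ has a common descendant in the first place.
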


\begin{proof}
Suppose by contradiction, that there exists a locally in-semicomplete digraph $D$ with no supervising vertex.
Consider a vertex of $D$, such that the set $S$ of vertices $v$ of $V(D)$ for which there exists a directed path from $s$ to $v$ has maximum size, i.e. $s$ is \emph{supervising} a set of vertices of maximum size.
Note that $s\in S$ since $s$ forms a directed path of length zero from itself to itself.
Since $D$ contains no supervising vertices, we know that $V(D)\setminus S$ is nonempty.
Plus, all the edges between $V(D)\setminus S$ and $S$ are oriented from $V(D)\setminus S$ to $S$.
Since $D$ is connected, there exists at least such an edge from a vertex $u$ of $V(D)\setminus S$ to a vertex of $S$.
Recall that $u$ has no in-neighbour in $S$.
Let $v$ be a vertex of $S$ that is an out-neighbour of $u$ and such that the length of a shortest directed path $P$ from $s$ to $v$ has minimum length. 
By the choice of $v$, no vertex of $P$ distinct from $v$ is an out-neighbour of $u$.
Note that $v$ is distinct from $s$, since otherwise $u$ is supervising more vertices than $s$, contradicting the choice of $s$.
Let $w$ be the in-neighbour of $v$ along $P$. Both $w$ and $u$ are in the in-neighbourhood of $v$. Hence, by in-semicompleteness of $D$, there must be an arc between them, which is a contradiction.
\end{proof}

A consequence of Lemma~\ref{lem:localIN} and Theorem~\ref{th:supervising} is the following corollary.

\begin{corollary}
\label{cor:localIN}
Let $D$ be a twin-free locally in-semicomplete digraph  on $n$ vertices, then 
$\gamma^{LD}(D) \leq \frac{3n}{4}$.
Moreover, if $D$ is quasi-twin-free, then 
$\gamma^{LD}(D) \leq \frac{2n}{3}$.
\end{corollary}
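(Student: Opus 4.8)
The final statement to prove is Corollary~\ref{cor:localIN}, which asserts that a twin-free locally in-semicomplete digraph $D$ on $n$ vertices satisfies $\gamma^{LD}(D)\leq \frac{3n}{4}$, and $\frac{2n}{3}$ if $D$ is additionally quasi-twin-free. The plan is simply to combine the two tools that precede it: Lemma~\ref{lem:localIN}, which guarantees that every locally in-semicomplete digraph has a supervising vertex, and Theorem~\ref{th:supervising}, which gives exactly these two bounds for twin-free (respectively quasi-twin-free) digraphs that contain a supervising vertex. Thus the proof is a one-line deduction.

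\begin{proof}
Let $D$ be a twin-free locally in-semicomplete digraph on $n$ vertices. By Lemma~\ref{lem:localIN}, $D$ contains a supervising vertex. Applying Theorem~\ref{th:supervising} to $D$ then yields $\gamma^{LD}(D)\leq \frac{3n}{4}$, and moreover $\gamma^{LD}(D)\leq \frac{2n}{3}$ whenever $D$ is quasi-twin-free.
\end{proof}

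There is essentially no obstacle here, since both ingredients are already established earlier in the paper; the only thing to check is that the hypotheses line up, namely that ``twin-free'' (resp.\ ``quasi-twin-free'') in Corollary~\ref{cor:localIN} matches the hypothesis of Theorem~\ref{th:supervising}, which it does verbatim, and that Lemma~\ref{lem:localIN} has no extra hypotheses beyond local in-semicompleteness — indeed its proof only uses that $D$ is connected, but a digraph with a supervising vertex is automatically connected, and in any case the statement of Lemma~\ref{lem:localIN} is unconditional on connectivity as written. So the corollary follows immediately, and the real content of this section lies in Theorem~\ref{th:supervising} and Lemma~\ref{lem:localIN}, not in their combination.
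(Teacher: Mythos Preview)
Your proof is correct and matches the paper's approach exactly: the paper does not give a separate proof but simply states that the corollary is a direct consequence of Lemma~\ref{lem:localIN} and Theorem~\ref{th:supervising}. Your side remark about connectivity is slightly circular (you cannot use the existence of a supervising vertex to justify connectivity when that existence is what Lemma~\ref{lem:localIN} is meant to establish), but this does not affect the validity of the deduction itself.
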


Consider the digraph $D$ obtained from a vertex whose out-neighbourhood is made of $k$ disjoint oriented triangles (see Figure~\ref{fig:localIN}). On this example, $D$ is a quasi-twin-free locally in-semicomplete digraph on $3k+1$ vertices. Note that $\gamma^{LD}(D)=2k$ and $\frac{2n}{3}=2k+1$. So we are at distance $1$ from the bound given by Corollary~\ref{cor:localIN} and thus asymptotically tight.

\begin{figure}[!ht]
\center
\begin{tikzpicture}
\path (0,4) node[draw,shape=circle,fill=gray!70] (t) {};

\path (1,1) node[draw,shape=circle,fill=gray!70] (b) {};
\path (-1,1) node[draw,shape=circle] (c) {};
\path (0,2) node[draw,shape=circle,fill=gray!70] (d) {};

\path (2,1) node[draw,shape=circle,fill=gray!70] (e) {};
\path (3,2) node[draw,shape=circle,fill=gray!70] (f) {};
\path (4,1) node[draw,shape=circle] (g) {};

\path (-4,1) node[draw,shape=circle,fill=gray!70] (h) {};
\path (-3,2) node[draw,shape=circle] (i) {};
\path (-2,1) node[draw,shape=circle] (j) {};

\draw[line width=0.4mm,>=latex,->] (c) -- (b);
\draw[line width=0.4mm,>=latex,->] (b) -- (d);
\draw[line width=0.4mm,>=latex,->](d)--(c) ;

\draw[line width=0.4mm,>=latex,->](e) -- (g);
\draw[line width=0.4mm,>=latex,->](g) -- (f);
\draw[line width=0.4mm,>=latex,->](f) -- (e);

\draw[line width=0.4mm,>=latex,->](h) -- (j);
\draw[line width=0.4mm,>=latex,->](j) -- (i);
\draw[line width=0.4mm,>=latex,->](i) -- (h);

\draw[line width=0.4mm,>=latex,->](t) -- (b) ;
\draw[line width=0.4mm,>=latex,->](t) -- (c) ;
\draw[line width=0.4mm,>=latex,->](t) -- (d);
\draw[line width=0.4mm,>=latex,->](t) -- (e) ;
\draw[line width=0.4mm,>=latex,->](t) -- (f) ;
\path[line width=0.4mm,>=latex,->,draw](t) to[out=-20,in=110] (g) ;
\path[line width=0.4mm,>=latex,->,draw](t) to[out=-160,in=70] (h) ;
\draw[line width=0.4mm,>=latex,->](t) -- (i) ;
\draw[line width=0.4mm,>=latex,->](t) -- (j) ;

\end{tikzpicture}
\caption{Disjoint oriented triangles forming the out-neighbourhood of an extra vertex. A locating dominating set of minimum size is given by gray vertices.}
\label{fig:localIN}
\end{figure}
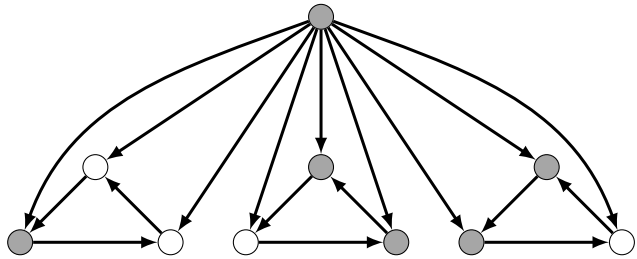

Note that a similar method cannot be applied for locally out-semicomplete digraphs (digraphs where all the out-neighbourhoods are semicomplete). Indeed, there are twin-free locally out-semicomplete digraphs for which the minimum dominating set has size $2(n-1)/3$ 
(see for example the reverse of the digraph of Figure \ref{fig:localIN}). Thus, there is no dominating set $S$ such that $|\mathcal P_S|\geq |S|-1$ and Theorem \ref{th:aline} cannot be applied.

\end{document}